\newenvironment{propproof}[1][\proofname]{\proof[#1]}{\endproof}
\tikzset{cross/.style={cross out, draw=black, minimum size=2*(#1-\pgflinewidth), inner sep=0pt, outer sep=0pt},cross/.default={2pt}}
\newcommand{\Z}{{\mathbb{Z}}}
\newcommand{\Q}{{\mathbb{Q}}}
\newcommand{\R}{{\mathbb{R}}}
\newcommand{\C}{{\mathbb{C}}}
\newcommand{\mS}{{{S}}}
\newcommand{\Lc}{{{\rm Lc}}}
\begin{document}

\title{${L}^{\infty}$-norm computation for linear time-invariant systems depending on parameters}

\author{
Alban Quadrat \inst{1} \and
Fabrice Rouillier\inst{1} \and 
Grace Younes\inst{2}
}

\authorrunning{A. Quadrat et al.}

\institute{
{Sorbonne Universit\'e and Universit\'e de Paris, CNRS, IMJ-PRG, Inria Paris, F-75005 Paris, France\\ \email{\{alban.quadrat, fabrice.rouillier\}@inria.fr}}\and
{Department of Science and Engineering, Sorbonne University Abu Dhabi, Abu Dhabi, UAE\\ \email{grace.younes@sorbonne.ae}}} %(e-mail: grace.younes@sorbonne.ae)}}
\maketitle     

\begin{abstract}

This paper focuses on representing the $L^{\infty}$-norm of finite-dimensional linear time-invariant systems with parameter-dependent coefficients. Previous studies tackled the problem in a non-parametric scenario by simplifying it to finding the maximum $y$-projection of real solutions $(x, y)$ of a system of the form $\Sigma=\{P=0, \,  \partial P/\partial x=0\}$, where $P \in \Z[x, y]$. To solve this problem, standard computer algebra methods were employed and analyzed \cite{bouzidi2021computation}.

In this paper, we extend our approach to address the parametric case. We aim to represent the ``maximal" $y$-projection of real solutions of $\Sigma$ as a function of the given parameters.
%a set of parameters $\alpha$. 
To accomplish this, we utilize cylindrical algebraic decomposition. This method allows us to determine the desired value as a function of the parameters within specific regions of parameter space.
\keywords{$L^{\infty}$-norm computation, polynomial systems, control theory.}
\end{abstract}
\section{Introduction}
A major issue in control theory aims to design a system, called a controller, which stabilizes a given unstable system, called a plant, and then optimizes the performances of the closed-loop system obtained by adding the controller to the plant in a feedback loop. Within the frequency domain approach to linear time-invariant systems, in a series of seminal papers, Zames shows that the standard concept of stability corresponds to a system defined by a transfer matrix (which defines the dynamics between the inputs and the outputs of the system) whose entries belong to the Hardy algebra $H^\infty$ formed by the holomorphic functions in the right half complex plane which are bounded for the supremum norm. By the maximum modulus principle, it means that the $H^\infty$-norm corresponds to the $L^\infty$-norm of the restriction of the transfer matrix to the imaginary axis $i \, \R$. In the case of finite-dimensional systems, i.e., systems defined by linear ordinary differential equations, the coefficients of the corresponding transfer matrix are rational functions of a complex variable $s$. Zames' approach was the starting point for the development of the so-called robust control theory or $H^\infty$-control theory (see \cite{Zhou} and the references therein), a major achievement in control theory, which is nowadays commonly used by companies working in the area of control theory. In this approach, a property of a system is said to be robust if it is valid for all systems defined in a ball around the original system with a small radius for the $H^\infty$-norm. This concept of robustness is crucial in control theory since it can be used to mathematically model the different uncertainties and errors naturally occurring in any mathematical model of a physical system. 

The computation of the $H^\infty$-norm of finite-dimensional linear time-invariant systems, i.e., the $L^\infty$-norm of the restriction of a matrix with complex univariate rational function entries to the purely imaginary axis $i \, \R$, is thus a fundamental problem in robust control theory. Unfortunately and contrary to the $L^2$-norm (see, e.g., \cite{Zhou}), it is not possible to simply characterize this $L^\infty$-norm using simple closed-form expressions depending on the system coefficients. 

Previous studies (see \cite{bouzidi2021computation} and the references therein) investigated this problem by reformulating it as the search for the maximum $y$-projection of real solutions $(x,y)$ for a system of two polynomial equations $\Sigma = \left\{ {P =0, \partial P/\partial x = 0}\right\}$, where $P \in \mathbb Z[x,y]$. To solve this problem in a certified manner, standard computer algebra methods (such as RUR, Sturm-Habich sequences, and certified root isolation of univariate polynomials) were used, and three different algorithms were proposed with their complexity analysis \cite{bouzidi2021computation}.

In this paper, we investigate the $H_{\infty}$-norm of linear time-invariant systems featuring coefficients dependent on parameters. To accomplish this, we expand upon the methodology introduced in \cite{bouzidi2021computation}, adapting it to handle parametric scenarios where $P$ belongs to $\mathbb{Z}\left[\alpha_1, \ldots, \alpha_d \right][x, y]$. Here, $\alpha_1, \ldots, \alpha_d$ represent the parameters under consideration. In this setting, the “maximal” $y$-projection of real solutions of $\Sigma$ can now be viewed as a semi-algebraic function of the given parameters. We show that it can be computed as part of a “cylindrical algebraic decomposition” of the parameter's space adapted to polynomials computed with an ad-hoc projection strategy.

In the following sections, we will begin by outlining the problem at hand. Subsequently, we will provide a concise overview of the research conducted in \cite{bouzidi2021computation}. This will serve as a foundational context before delving into the extension of this work to scenarios involving parameters within the polynomial coefficients. It is important to note that addressing these parameter-dependent cases presents challenges that cannot be straightforwardly addressed through numerical methods.
\section{Problem description}\label{Probdesc}

We first introduce a few standard notations and definitions. If $\mathbb{k}$ is a field and $P \in \mathbb{k}[x,y]$, then $Lc_{var}(P)$ is the \emph{leading coefficient} of $P$ with respect to the variable $var \in \{x,y\}$ and $\deg_{var}(P)$ the \emph{degree} of $P$ in the variable $var \in \{x,y\}$. We also denote by $\deg(P)$ the \emph{total degree} of $P$. Moreover, let $\pi_x: \mathbb{R}^{2} \longrightarrow \mathbb{R}$ be the projection map from the real plane $\mathbb{R}^{2}$ onto the $x$-axis, i.e., $\pi_x(x,y) =x$ for all $(x, \, y) \in \mathbb{R}^{2}$. For $P, Q \in \mathbb{k}[x, y]$, let ${\rm gcd}(P, Q)$ be the greatest common divisor of $P$ and $Q$, $I:=\langle P, Q \rangle$ the ideal of $\mathbb{k}[x, y]$ generated by $P$ and $Q$, and $V_{\mathbb{K}}(I):=\{(x, \, y) \in \mathbb{K}^2 \; | \; \forall \; R \in I: \; R(x, y)=0\}$, where $\mathbb{K}$ is a field containing $\mathbb{k}$. Finally, let $\mathbb{C}_+:=\{s \in \mathbb{C} \; | \; \Re(s) > 0\}$ be the \emph{open right-half plane} of $\mathbb{C}$.

\begin{definition}[\cite{Doyle,Zhou}]
Let ${RH}_{\infty}$ be the $\mathbb{R}$-algebra of all the proper and stable rational functions with real coefficients, namely:
$${RH}_{\infty}:= \left \{\dfrac{n}{d} \; | \; n, d \in \mathbb{R}[s], \; {\rm gcd}(n, d)=1, \, \deg_s(n) \leq \deg_s(d), V_{\mathbb{C}}(\langle d \rangle) \cap \mathbb{C}_+=\emptyset \right \}.$$
\end{definition}

An element $g$ of ${RH}_{\infty}$ is holomorphic and bounded on $\mathbb{C}_+$, i.e.,
$$\parallel g \parallel_{\infty} \, := \, \sup_{s \in \mathbb{C}_+} |g(s)| < +\infty,$$
${RH}_{\infty}$ is a sub-algebra of the \emph{Hardy algebra} ${H}_{\infty}(\mathbb{C}_+)$ of bounded holomorphic functions on $\mathbb{C}_+$. The \emph{maximum modulus principle} of complex analysis yields:
$$\parallel g \parallel_{\infty} \, = \, \sup_{\omega \in \mathbb{R}} |g(i \, \omega)|.$$
Note that this equality shows that the function $g_{|i \, \mathbb{R}}: i \, \omega \in i \, \mathbb{R} \longmapsto g(i \, \omega)$ belongs to the \emph{Lebesgue space} ${L}_{\infty}(i \, \mathbb{R})$ or, more precisely, to the following $\mathbb{R}$-algebra
$$\begin{array}{ll}
{RL}_{\infty} := \left 
\{\dfrac{n(i \, \omega)}{d(i \, \omega)} \; | \; n, d \in \mathbb{R}[i \, \omega], \; {\rm gcd}(n, d)=1, \right.  & \deg_\omega(n) \leq \deg_\omega(d), \; V_{\mathbb{R}}(\langle d \rangle)=\emptyset
%\\
%& \{\omega \in \mathbb{R} \; | \; d(i \, \omega) =0 \}=\emptyset 
\},
\end{array}$$
i.e., the algebra of real rational functions on the imaginary axis $i \, \mathbb{R}$ which are proper and have no poles on $i \, \mathbb{R}$, or simply, the algebra of real rational functions with no poles on $i \, \mathbb{P}^1(\mathbb{R})$, where 
$\mathbb{P}^1(\mathbb{R}):=\mathbb{R} \cup {\infty}$.

We can extend the above ${L}_{\infty}$-norms defined on functions of ${RH}_{\infty}$ (resp., ${RL}_{\infty}$) to matrices as follows. Let  $G \in {RH}_{\infty}^{u \times v}$ (resp., $G \in {RL}_{\infty}^{u \times v}$, $\mathbb{R}(s)^{u \times v}$), i.e., $G$ is a $u \times v$ matrix with entries in ${RH}_{\infty}$ (resp., ${RL}_{\infty}$, $\mathbb{R}(s)$) and let $\Bar{\sigma}\left ( \cdot \right)$ denote the largest \emph{singular value of a complex matrix}.  Then, we can define:
$$\parallel G \parallel_{\infty} \, := \,  \sup_{s \in \mathbb{C}_+} \Bar{\sigma}\left(G(s) \right) \;  \left(\mbox{resp.,} \, \parallel G \parallel_{\infty} \, := \,  \sup_{\omega \in \mathbb{R}} \Bar{\sigma}\left(G(i \, \omega) \right) \right).$$ 
If $G \in {RH}_{\infty}^{u \times v}$, then, as above, we have $\parallel G \parallel_{\infty} \, = \, \sup_{\omega \in \mathbb{R}} \Bar{\sigma}\left(G(i \, \omega) \right)$. 

The \emph{conjugate} $\Tilde{G}$ of $G \in \mathbb{R}(s)^{u \times v}$ is defined by $\Tilde{G}(s):=G^{T}(-s)$. 

The next proposition gives a first characterization of $\parallel G \parallel_{\infty}$.

\begin{proposition}[\cite{kanno2006validated}]\label{prop:Kanno-Smith}
Let $\gamma > 0$, $G \in \mathbb{R}(s)^{u \times v}$ be such that 
$G_{|i \, \mathbb{R}} \in {RL}_{\infty}^{u \times v}$ and let us consider $\Phi_{\gamma}(s)= \gamma^2 \, I_v-\Tilde{G}(s) \, G(s)$. Then, $\gamma > \parallel G \parallel_{\infty}$ if and only if $\gamma
> \Bar{\sigma}\left(G(i \, \infty)\right)$ and $\det(\Phi_{\gamma}(i \, \omega)) \neq 0$ for all $\omega \in \mathbb{R}$.  
\end{proposition}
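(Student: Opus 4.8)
The plan is to reduce the two-sided equivalence to a pointwise spectral analysis of $\Phi_\gamma$ along the imaginary axis, followed by a continuity and intermediate-value argument in the frequency $\omega$. The crucial first observation is that on $i\,\mathbb{R}$ the conjugate $\tilde G$ coincides with the Hermitian adjoint of $G$. Indeed, since the entries of $G$ are rational functions with \emph{real} coefficients and $-i\,\omega = \overline{i\,\omega}$, each entry satisfies $g_{kl}(-i\,\omega) = \overline{g_{kl}(i\,\omega)}$, whence $\tilde G(i\,\omega) = G^{T}(-i\,\omega) = \overline{G(i\,\omega)}^{\,T} = G(i\,\omega)^{*}$. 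Consequently $\Phi_\gamma(i\,\omega) = \gamma^{2}\,I_v - G(i\,\omega)^{*}\,G(i\,\omega)$ is Hermitian, and since $G(i\,\omega)^{*}\,G(i\,\omega)$ is a Gram matrix with eigenvalues $\sigma_j(i\,\omega)^{2}$ (the squared singular values of $G(i\,\omega)$), the eigenvalues of $\Phi_\gamma(i\,\omega)$ are exactly $\gamma^{2}-\sigma_j(i\,\omega)^{2}$. Taking determinants gives $\det\!\big(\Phi_\gamma(i\,\omega)\big) = \prod_j \big(\gamma^{2}-\sigma_j(i\,\omega)^{2}\big)$, so, because $\gamma>0$ and the $\sigma_j$ are nonnegative, $\det(\Phi_\gamma(i\,\omega))=0$ if and only if $\gamma$ equals one of the singular values $\sigma_j(i\,\omega)$. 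Writing $f(\omega):=\bar\sigma(G(i\,\omega))$ for the largest one, the condition ``$\det(\Phi_\gamma(i\,\omega))\neq 0$ for all $\omega$'' is therefore equivalent to ``$f(\omega)\neq\gamma$ for all $\omega$''.

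With this dictionary, the forward implication is direct. If $\gamma > \|G\|_\infty = \sup_{\omega} f(\omega)$, then for every $\omega$ and every $j$ we have $\gamma > f(\omega) \geq \sigma_j(i\,\omega)$, so each factor $\gamma^{2}-\sigma_j(i\,\omega)^{2}$ is strictly positive and $\det(\Phi_\gamma(i\,\omega))>0$. Moreover, properness of $G$ ensures that $G(i\,\infty):=\lim_{\omega\to\pm\infty}G(i\,\omega)$ exists (and is direction-independent for a real rational matrix), so continuity of the largest singular value yields $\bar\sigma(G(i\,\infty)) = \lim_{\omega\to\pm\infty} f(\omega) \leq \sup_{\omega} f(\omega) < \gamma$.

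For the converse I would argue by contradiction, using that $f$ is well defined and continuous on all of $\mathbb{R}$ precisely because $G_{|i\,\mathbb{R}}\in RL_\infty^{u\times v}$ has no poles on $i\,\mathbb{R}$. Assume $\gamma > \bar\sigma(G(i\,\infty))$ and $f(\omega)\neq\gamma$ for all $\omega$, and suppose for contradiction that $\gamma \leq \|G\|_\infty = \sup_{\omega} f(\omega)$. Since $f(\omega)\to \bar\sigma(G(i\,\infty)) < \gamma$ as $\omega\to\pm\infty$, there is a bounded interval $[-M,M]$ outside of which $f$ stays strictly below $(\gamma+\bar\sigma(G(i\,\infty)))/2 < \gamma$; combined with $\sup_\omega f \geq \gamma$, this forces the supremum to be attained on the compact set $[-M,M]$, say $f(\omega^{*}) = \sup_\omega f \geq \gamma$. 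If $f(\omega^{*})>\gamma$, then applying the intermediate value theorem between $\omega^{*}$ and a large frequency where $f<\gamma$ produces an $\omega$ with $f(\omega)=\gamma$; and if $f(\omega^{*})=\gamma$ this is already such a point. Either way $f$ attains the value $\gamma$, contradicting $f\neq\gamma$ everywhere, so $\gamma > \|G\|_\infty$.

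I expect the genuine work to lie entirely in this converse, and specifically in the bookkeeping that forces both hypotheses to act simultaneously: the condition at infinity $\gamma>\bar\sigma(G(i\,\infty))$ is what prevents the supremum of $f$ from escaping to $\omega=\pm\infty$ and thereby guarantees it is attained on a compact set, while $\det(\Phi_\gamma(i\,\omega))\neq 0$ is what forbids $f$ from touching the level $\gamma$. The delicate subcase is the one in which $\sup_\omega f = \gamma$ is approached but perhaps not obviously attained; the compactness supplied by the behaviour at infinity is exactly what resolves it, and getting that step watertight (rather than the straightforward spectral reduction of the first paragraph) is the main obstacle.
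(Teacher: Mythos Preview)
The paper does not supply its own proof of this proposition; it is quoted with a citation to \cite{kanno2006validated} and used as a black box. Your argument is a correct, self-contained proof of the standard spectral/intermediate-value type, so there is nothing to compare against.

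One small wording issue worth tightening: the claimed \emph{equivalence} between ``$\det(\Phi_\gamma(i\,\omega))\neq 0$ for all $\omega$'' and ``$f(\omega)\neq\gamma$ for all $\omega$'' only holds as an implication in one direction, since the determinant also vanishes when $\gamma$ coincides with a \emph{smaller} singular value $\sigma_j(i\,\omega)<f(\omega)$. This does not damage your argument: in the forward direction you show every factor $\gamma^{2}-\sigma_j(i\,\omega)^{2}$ is strictly positive, and in the converse you only use the valid implication $\det(\Phi_\gamma(i\,\omega))\neq 0 \Rightarrow f(\omega)\neq\gamma$. The compactness step (forcing the supremum onto a bounded interval via the limit at infinity, then invoking the intermediate value theorem) is correctly identified as the place where the hypothesis $\gamma>\bar\sigma(G(i\,\infty))$ is genuinely needed.
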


Let $n(\omega, \gamma)$ and $d(\omega)$ be two coprime polynomials over $\mathbb{R}[\omega,\gamma]$ satisfying: 
\begin{equation}\label{def:Phi}
\det(\Phi_{\gamma}(i \, \omega))=\dfrac{n(\omega, \gamma)}{d(\omega)}.    
\end{equation}
Note that $\det(\Phi_{\gamma}(s))$ is a real function in $s^2$ and $\gamma^2$, and thus, $\det(\Phi_{\gamma}(i \, \omega))$ is a real function in $\omega^2$ and $\gamma^2$.

Hence, to compute the maximal singular value of $G(i \, \omega)$, we have to compute the maximal real value $\gamma$ satisfying that a real value $\omega$ exists such that  $\det(\Phi_{\gamma}(i \, \omega))$ vanishes. Since $\det(\Phi_{\gamma}(i \, \omega))$ is a real rational function of $\omega$ and $\gamma$ (in fact a real rational function of $\omega^2$ and $\gamma^2$ by the parity of $\Phi_{\gamma}$), we can write $\det(\Phi_{\gamma}(i \, \omega))=n(\omega, \gamma)/d(\omega)$, where $n(\omega, \gamma) \in \R[\omega, \gamma]$ and $d(\omega) \in \R[\omega]$ are coprime. Since $G$ has no poles on the imaginary axis, 
$d(\omega)$ does not vanish on $\R$. Hence, to compute the $L^{\infty}$-norm of $F$, it suffices to compute the maximal real value $\gamma$ such that there exists at least one real value $\omega$ for which $n(\omega, \gamma)$  vanishes. {\em Hence, we are led to studying the $\gamma$-extremal points (and thus, the \emph{critical points}) of the following real plane algebraic curve:} 
\begin{equation}\label{eq:Curve_C}
{\mathcal C}:=\{(\omega, \, \gamma) \in \R^2 \; | \; n(\omega, \gamma)=0\}. 
\end{equation}
%A consequence of Proposition~\ref{prop:Kanno-Smith} is the next result.  

\begin{proposition}\label{prop:caracterization_infty_norm}
Let $G \in {RL}_{\infty}^{u \times v}$ and $n \in \mathbb{R}[\omega, \gamma]$ be defined by (\ref{def:Phi}). We denote by $\Bar{n} \in \mathbb{R}[\omega,\gamma]$ the square free part of $n$. Then, we have: 
$$\parallel G \parallel_{\infty} =\max \left \{ \pi_{\gamma} \left ( V_{\mathbb{R}} \left(
\left \langle 
\Bar{n}, \frac{\partial \Bar{n}}{\partial \omega} 
\right \rangle 
\right)\right) \cup V_{\mathbb{R}}\left(\langle Lc_{\omega}(\Bar{n}) \rangle \right)  \right \}.$$
\end{proposition}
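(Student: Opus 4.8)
The plan is to reduce the computation of $\parallel G \parallel_{\infty}$ to finding the largest value of the coordinate $\gamma$ attained on, or approached along, the real curve $\mathcal C$, and then to argue that this largest value is realised either at a critical point of the projection $\pi_\gamma$ restricted to $\mathcal C$ (recorded by the first set) or as a limiting value reached as $\omega \to \pm\infty$ along a branch of $\mathcal C$ (recorded by the second set). Throughout I write $\Gamma := \parallel G \parallel_{\infty}$, $\gamma_\infty := \Bar{\sigma}(G(i\,\infty))$, and I set $A := \pi_\gamma\big(V_{\R}(\langle \Bar n, \partial \Bar n/\partial \omega\rangle)\big)$ and $B := V_{\R}(\langle Lc_\omega(\Bar n)\rangle)$. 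The first step is to record the singular-value description of $n$: since $G$ has real coefficients, $\Tilde G(i\,\omega)\,G(i\,\omega) = G(i\,\omega)^{*} G(i\,\omega)$, so
\[
\det(\Phi_\gamma(i\,\omega)) = \prod_{j=1}^{v}\big(\gamma^2 - \sigma_j(G(i\,\omega))^2\big),
\]
where the $\sigma_j(G(i\,\omega))$ are the singular values of $G(i\,\omega)$. As $d(\omega)$ has no real root and $\mathcal C = V_{\R}(\langle n\rangle) = V_{\R}(\langle \Bar n\rangle)$ (the squarefree part generates the radical), a real point $(\omega,\gamma)$ lies on $\mathcal C$ if and only if $|\gamma| = \sigma_j(G(i\,\omega))$ for some $j$. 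In particular $\pi_\gamma(\mathcal C)\cap\R_{\ge 0}$ is the range of the singular-value functions, and by the definition of the $RL_{\infty}$-norm one has $\Gamma = \sup_{\omega\in\R}\Bar\sigma(G(i\,\omega))$.

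Next I would establish the dichotomy that places $\Gamma$ into $A\cup B$. The map $\omega\mapsto\Bar\sigma(G(i\,\omega))$ is continuous on $\R$ and, $G$ being proper, tends to $\gamma_\infty$ as $\omega\to\pm\infty$; hence its supremum $\Gamma$ is either attained at some finite $\omega_0$ or equals $\gamma_\infty$. In the first case $(\omega_0,\Gamma)\in\mathcal C$ is a point where $\gamma$ is maximal along $\mathcal C$: if $\partial \Bar n/\partial\omega(\omega_0,\Gamma)\ne 0$, the implicit function theorem would express $\mathcal C$ locally as a graph $\omega = h(\gamma)$, forcing $\gamma$ to take values on both sides of $\Gamma$ and contradicting maximality; therefore $\partial \Bar n/\partial\omega(\omega_0,\Gamma)=0$ and $\Gamma\in A$. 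In the second case, choosing $\omega_k\to+\infty$ and setting $\gamma_k := \Bar\sigma(G(i\,\omega_k))\to\gamma_\infty$ with $(\omega_k,\gamma_k)\in\mathcal C$, and writing $\Bar n = \sum_{l=0}^{m} a_l(\gamma)\,\omega^l$ with $a_m = Lc_\omega(\Bar n)$, I divide the identity $0 = \Bar n(\omega_k,\gamma_k)$ by $\omega_k^{\,m}$ and let $k\to\infty$; all terms but the leading one vanish, giving $a_m(\gamma_\infty)=0$, i.e. $\Gamma=\gamma_\infty\in B$. Thus $\Gamma\in A\cup B$ in every case.

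It then remains to prove the reverse bound $\max(A\cup B)\le\Gamma$ and to conclude. Every $\gamma\in A\subseteq\pi_\gamma(\mathcal C)$ satisfies $\gamma\le|\gamma| = \sigma_j(G(i\,\omega))\le\Gamma$ for a suitable $j$ and $\omega$, so $\max A\le\Gamma$. For $B$, since $\Bar n$ divides $n$ we have $Lc_\omega(\Bar n)\mid Lc_\omega(n)$, and a computation of the leading behaviour of $\det(\Phi_\gamma(i\,\omega))$ as $\omega\to\infty$ (using $G(i\,\omega)\to G(i\,\infty)$) identifies $Lc_\omega(n)(\gamma)$, up to a nonzero constant, with $\prod_{j}(\gamma^2 - \sigma_j(G(i\,\infty))^2)$; its real roots are the values $\pm\sigma_j(G(i\,\infty))$, all bounded by $\gamma_\infty$. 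Hence $\max B\le\gamma_\infty\le\Gamma$, and combining this with $\Gamma\in A\cup B$ yields the claimed equality $\Gamma = \max(A\cup B)$.

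The main obstacle I expect is the analysis at infinity: proving rigorously that $Lc_\omega(\Bar n)$ both captures the limiting value $\gamma_\infty$ (the asymptotic-value step, via the division argument above) and is bounded above by $\gamma_\infty$ (the leading-coefficient identity obtained from the $\omega\to\infty$ expansion of $\det\Phi_\gamma$), together with the careful treatment of the case where the supremum defining $\Gamma$ is not attained at a finite $\omega$. By contrast, the reduction to singular values and the implicit-function-theorem characterisation of the attained critical points are routine; the passage to the squarefree part $\Bar n$ (rather than $n$) is precisely what guarantees that the gradient of the defining polynomial does not vanish identically along $\mathcal C$, so that $\langle \Bar n, \partial \Bar n/\partial\omega\rangle$ genuinely cuts out the critical and singular points of the projection.
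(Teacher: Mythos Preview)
The paper does not supply a formal proof of this proposition; it only offers the heuristic paragraph preceding the statement, deriving from Proposition~\ref{prop:Kanno-Smith} that $\|G\|_\infty$ is the largest $\gamma$ for which $n(\omega,\gamma)=0$ has a real root $\omega$, and then remarking that this amounts to locating the $\gamma$-extremal (critical) points of the real curve $\mathcal C$. Your argument follows exactly this outline but fills in all the missing details: the singular-value factorisation of $\det\Phi_\gamma(i\omega)$ (equivalent to Kanno--Smith), the implicit-function-theorem step forcing $\partial\bar n/\partial\omega=0$ at an interior maximiser, the asymptotic division argument showing $\gamma_\infty\in V_{\R}(\langle Lc_\omega(\bar n)\rangle)$ when the supremum is not attained, and the reverse inequality via $Lc_\omega(\bar n)\mid Lc_\omega(n)$ together with the identification of $Lc_\omega(n)$ with a constant multiple of $\det(\gamma^2 I-G(i\infty)^*G(i\infty))$. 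Your proof is correct and is precisely the rigorous version of the sketch the paper gestures at; nothing substantive differs in approach.
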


\begin{corollary}\label{RLinftybounded}
Let $G$ $\in$ ${R}{L}_{\infty}^{u \times v}$ and $n \in \mathbb{R}[\omega, \gamma]$ be the numerator of $$\det(\gamma^2 \, I_v- \Tilde{G}(i \, \omega) \, G(i \, \omega))$$ defined by (\ref{def:Phi}). Then, the real $\gamma$-projection $\pi_{\gamma}(V_{\mathbb{R}}(\langle n \rangle))$ of $V_{\mathbb{R}}(\langle n \rangle)$ is bounded by $\parallel G\parallel_{\infty}$.
\end{corollary}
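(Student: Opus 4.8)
The plan is to argue by contradiction, using the characterization of $\parallel G \parallel_\infty$ provided by Proposition \ref{prop:Kanno-Smith}. Fix a value $\gamma_0 \in \pi_{\gamma}(V_{\mathbb{R}}(\langle n \rangle))$; by definition of the projection there is a real $\omega_0$ with $n(\omega_0, \gamma_0) = 0$, and the goal is to show $\gamma_0 \leq \parallel G \parallel_\infty$.

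First I would dispose of the trivial sign case. Since $\parallel G \parallel_\infty$ is a supremum of largest singular values, it is nonnegative, so any $\gamma_0 \leq 0$ occurring in the projection automatically satisfies $\gamma_0 \leq \parallel G \parallel_\infty$. Such values do occur, since, as noted after (\ref{def:Phi}), $n$ depends only on $\omega^2$ and $\gamma^2$, so $V_{\mathbb{R}}(\langle n \rangle)$, and hence its $\gamma$-projection, is symmetric about $\gamma = 0$. It therefore remains to treat $\gamma_0 > 0$.

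The key step is to transport the vanishing of $n$ to the vanishing of the determinant $\det(\Phi_{\gamma_0}(i \, \omega_0))$. By (\ref{def:Phi}) we have $\det(\Phi_{\gamma_0}(i \, \omega_0)) = n(\omega_0, \gamma_0)/d(\omega_0)$, and because $G \in {RL}_{\infty}^{u \times v}$ has no poles on the imaginary axis the denominator $d$ does not vanish on $\mathbb{R}$, so $d(\omega_0) \neq 0$. Thus $n(\omega_0, \gamma_0) = 0$ forces $\det(\Phi_{\gamma_0}(i \, \omega_0)) = 0$. Now suppose, for contradiction, that $\gamma_0 > \parallel G \parallel_\infty$. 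Applying the forward implication of Proposition \ref{prop:Kanno-Smith} with $\gamma = \gamma_0 > 0$ yields $\det(\Phi_{\gamma_0}(i \, \omega)) \neq 0$ for every $\omega \in \mathbb{R}$, contradicting the vanishing at $\omega_0$ just established. Hence $\gamma_0 \leq \parallel G \parallel_\infty$, and since $\gamma_0$ was arbitrary the entire projection is bounded above by $\parallel G \parallel_\infty$.

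The only genuine delicacy is the hypothesis $\gamma > 0$ built into Proposition \ref{prop:Kanno-Smith}: the contradiction is available only for positive $\gamma_0$, which is exactly why the nonpositive branch of the projection has to be absorbed separately by the nonnegativity of the norm. Beyond that I expect no obstacle, as the argument uses only one direction of the biconditional and the non-vanishing of $d$ on $\mathbb{R}$ is already guaranteed by the membership $G \in {RL}_{\infty}^{u \times v}$.
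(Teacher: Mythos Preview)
Your argument is correct and is precisely the route the paper has in mind: the corollary is stated without an explicit proof, but the surrounding discussion makes clear it is an immediate consequence of Proposition~\ref{prop:Kanno-Smith} together with the non-vanishing of $d$ on $\mathbb{R}$, which is exactly what you invoke. Your handling of the sign issue for $\gamma_0\le 0$ is a clean way to dispose of the hypothesis $\gamma>0$ in Proposition~\ref{prop:Kanno-Smith}.
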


According to Proposition~\ref{prop:caracterization_infty_norm}, given $G \in {RL}_{\infty}^{u \times v}$, the problem of computing $\parallel G \parallel_{\infty}$ can be reduced to the computation of the maximal $\gamma$-projection of the real solutions of the following bivariate polynomial system: 
\begin{equation}\label{eq:Sigma}
\Sigma:=\left \{\Bar{n}(\omega,\gamma), \frac{\partial \Bar{n}(\omega, \gamma)}{\partial \omega} \right \}.    
\end{equation}

This amounts to finding the maximal positive real number $\gamma_{\star}$ such that the plane algebraic curve $n(\omega, \gamma)=0$ have points of the form $(\omega, \; \gamma_{\star})$ with $\omega$ being real numbers, i.e., such that $n(\omega,\gamma_{\star})=0$ has at least one real root $\omega$. This study was done in \cite{bouzidi2021computation} for $n \in \mathbb Z[\omega, \gamma]$. In the upcoming section, we will provide a concise summary of this study before delving into our study of the problem where $n \in \mathbb{Z}[\alpha][\omega, \gamma]$, with $\alpha=\alpha_1, \ldots, \alpha_d$ representing a set of real parameters.

\section{Proposed methods in the non-parametric case}\label{Problem description: non parametric case}
Without loss of generality, we will suppose that $n$ is squarefree in $\mathbb{Z}[\omega,\gamma]$. We also denote $x := \omega$, $y:=\gamma$ and $P:=n(\omega, \gamma)$.

Based on standard computer algebra methods and implementations, \cite{bouzidi2021computation} developed methods for the study of the computation of the $L^{\infty}$-norm of finite-dimensional linear time-invariant systems. This problem is reduced to the computation of the maximal $y$-projection of the real solutions $(x, y)$ of the zero-dimensional system of two bivariate polynomial equations defined by:
{ \begin{equation}\label{eq:Sigma}
 \Sigma = V_{\mathbb R}\left(\left\langle P,\; Q\right \rangle\right), \quad P\, \in \mathbb{Z}[x,y], \quad Q =\frac{\partial P}{\partial x}.
 \end{equation} 
}
Given coprime polynomials $P$ and $Q$ of degrees bounded by $d$ and coefficient bitsize bounded by $\tau$, \cite{bouzidi2021computation} proposes two approaches for the computation of the maximal $y$-projection of the real solutions of $\Sigma$. The first one uses a \emph{linear separating form} and the second the \emph{real root counting} of a univariate polynomial with algebraic coefficients. We briefly explain the functionality of both approaches before moving to the parametric case.

\emph{Separating linear form}. This approach is a direct bivariate solving with polynomials with rational coefficients. In this context, two methods were developed requiring putting (\ref{eq:Sigma}) in \emph{a local generic position},  i.e., finding a separating linear form $y + a \, x$ that defines a shear of the coordinate system $(x, \, y)$, i.e., $(x, \, y) \longmapsto (x, \, t- a \, x)$, so that no two distinct solutions of the sheared system defined by $\Sigma_a=\big\{P(x,  t- a \, x) = 0,\, Q(x,  t- a \, x) = 0 \big\}$ are horizontally aligned. This approach has long been used in the computer algebra literature. 
As shown in \cite{bouzidi2015separating,bouzidi2016solving}, a separating linear form $y + a \, x$, with $a\in \{0,\hdots,2 \, d^4\}$, can be computed. As studied in \cite{bouzidi2016solving}, we can then use a \emph{Rational Univariate Representation} (RUR) for the sheared system $\Sigma_a$ and compute  \emph{isolating boxes} for its real solutions. We can simply apply this approach (i.e., the so-called \emph{RUR method}) to the polynomial system associated with the $L^{\infty}$-norm computation problem and then choose the maximal $y$-projection of the real solutions of the system. This value is represented by its \emph{isolating interval} with respect to the univariate polynomial embodying the $y$-projection of the system solutions, i.e., the so-called \emph{resultant polynomial} $\Res(P, Q, x)$. The complexity analysis shows that this algorithm performs $\Tilde{{O}}_B\big(d_{y} \, d_{x}^3 \, (d_{y}^2+d_{x} \, d_{x}+d_{x} \, \tau) \big)$ bit operations in the worst case, where $\tau$ is the maximal coefficient bitsize of $P$ and $Q$, and $d_v =\max\big( \deg_v(P),\deg_v(Q) \big)$ with $v \in \{x, y\}$.  

Alternatively, it is possible to localize the maximal $y$-projection of the real solutions of  $\Sigma$ by only applying a linear separating form on the system $\Sigma$, i.e., without computing isolating boxes for the whole system solutions. The linear separating form $t= y\, +\, s \, x$ proposed in \cite{cheng2009root} preserves the order of the solutions of the sheared system $$\Sigma_s=V_{\mathbb R}\left(\left\langle P(x, t- s\, x ), \, Q(x, t-s\, x ) \right\rangle\right)$$ with respect to the $y$-projection of the solutions of the original system $\Sigma$. In other words, with this linear separating form $t= y\, +\, s \, x$, we have:  $$t_1=y_1\, +\, s \, x_1 \, <\, t_2=y_2\, +\, s \, x_2 \, \implies \, y_1 \, \leq \, y_2.$$ Thus, the projection of the solutions of $\Sigma_s$ onto the new separating axis $t$ could be done so that we could simply choose the $y$-projection corresponding to the maximal $t$-projection of the real solutions of $\Sigma_s$. The drawback of this method is in the growth of the size of the coefficients of $\Sigma_s$ 
%for the linear separating form $t\, =\, y\, +\, s \, x$ 
due to the large size of $s$. The complexity analysis shows that the last algorithm performs, in the worst case,  $\Tilde{{O}}_B\big( d_x^3 \, d_y^4 \, \tau\, \big( d_x^2 + d_x\, d_y + d_y^2  \big) \big)$ bit operations. 

\emph{Real roots counting}. This approach consists in univariate solving with a polynomial of algebraic coefficients. The third method localized the maximal $y$-projection of the system real solutions  $-$ denoted by $\Bar{y}$ $-$ by first isolating the real roots of the univariate {resultant} polynomial ${\rm Res}(P, \frac{\partial P}{\partial x}, x)$ and
then verifying the existence of at least one real root of the greatest common divisor $\gcd \big(P(x, \Bar{y}), \frac{\partial P}{\partial y}(x, \Bar{y})\big) \in \mathbb{R}[x]$ of $P(x, \Bar{y})$ and $\frac{\partial P}{\partial x}(x, \Bar{y})$. But the polynomial $P$  $-$ corresponding to our modeled problem $-$ defines a plane real algebraic curve bounded in the $y$-direction by the value that we are aiming at computing. Thus, a resulting key point is that the number of real roots of $\gcd \big(P(x, \Bar{y}), \frac{\partial P}{\partial x}(x, \Bar{y})\big)$ is equal to the number of real roots of $P(x, \Bar{y})$. Hence, we could simply compute the \emph{Sturm-Habicht sequence} \cite{gonzalez1998sturm} of  $P(x,\Bar{y})$ for counting the number of its real roots without any consequent overhead. Since the $\gcd$ polynomial has a larger size than the polynomial $P$, this key point leads to a better complexity in the worst case. We also mention that the {Sturm-Habicht sequence} corresponding to $P(x, \Bar{y}) \in \R[x]$ is a \emph{signed subresultant sequence} of the polynomials $P(x, \Bar{y}) \in \R[x]$ and its derivative with respect to $x$. Being already computed to obtain ${\rm Res}(P, \frac{\partial P}{\partial x}, x)$, the practical and theoretical complexities are mainly carried by the complexity of evaluating the leading coefficients with respect to $x$ of the subresultant polynomials 
(the so-called \emph{the principal subresultant coefficients}) over the real value $\Bar{y}$. Additionally, if the real plane algebraic curve $P(x, y) = 0$ has no real \emph{isolated singular points}, then the complexity can be further improved since, in this case, the evaluation is done over a rational number instead of an algebraic number. It is worthwhile mentioning that this improvement is theoretically slight. In fact, for evaluating over an algebraic number, say $\Bar{y}$, we are technically evaluating over two rational numbers,  mainly the endpoints of the isolating interval of the algebraic value $\Bar{y}$, where $\Bar{y}$ is a real root of  ${\rm Res}(P, \frac{\partial P}{\partial x}, x)$.    
The complexity analysis showed that this algorithm performs $\Tilde{{O}}_B\big( d_{x}^4 \, d_{y}^2 \, (d_y+\tau) \big)$ bit operations in the worst case and $\Tilde{{O}}_B(d_x^4 \,  d_y^2 \, {\tau})$ when the plane curve $P(x,y)=0$ has no real isolated singular points. 

After implementing the three approaches in {\tt Maple}, the real roots counting approach was shown to perform better than its counterpart.
\section{Parametric case}
\subsection{Background}

Let us consider the \emph{basic semi-algebraic set} defined by 
$$\mS = \{x \in \R^n \; | \;  p_1(x) = 0,\ldots, p_s(x) = 0, \, f_1(x) > 0, \ldots, f_s(x) > 0\}$$
where $p_i$, $f_j$ are polynomials with rational coefficients for $i=1, \ldots, s$. Moreover, let $[U, X] := [u_1,\ldots, u_d, x_{d+1},\ldots, x_n]$ be the set of unknowns or variables, while $U = [u_1,\ldots, u_d]$ is the set of
parameters and $X = [x_{d+1},\ldots, x_n]$ the set of the indeterminate. We denote by $\Pi_U: \C^n \longrightarrow \C^d$ the canonical projection onto the parameter space $(u_1, \ldots, u_d, x_{d+1}, \ldots, x_n) \longmapsto (u_1, \ldots , u_d)$. 
Finally, for any set $\mathcal{V} \subset \C^n$, we will denote by $\overline{\mathcal{V}}$ the \emph{$\C$-Zariski closure} of $\mathcal{V}$, namely, the smallest affine algebraic set containing $\mathcal{V}$.

In this section, we only consider systems that are so-called \emph{well-behaved systems}. They are systems that contain as many equations as indeterminates, are generically zero-dimensional, i.e., for almost all complex parameter values, at most finitely many complex solutions exist, and \emph{generically radical}, i.e., for almost all complex parameter values, there are no solutions of multiplicity greater than $1$ (in particular, the input equations are square-free).
 
In applications, questions that often arise concern the structure of the solution space in terms of the parameters such as, e.g., {determining the parameter values for which real solutions exist or, more generally, determining the parameter values for which the system has a given number of real solutions.}

To solve the well-behaved system $\mathcal S$, it is significant to choose a finite number of
representative ``good'' parameter values that cover all possible cases.

A method was proposed by D.~Lazard and F.~Rouillier in \cite{lazard2007solving} based on the concept of \emph{discriminant variety}. This method can be outlined as follows. First, the set of solutions of the system,  considered as equations in both the parameters and the indeterminates, is projected onto the parameter space. Then, the topological closure $\Bar{S}$ of the resulting projection, which will usually be equal to the whole parameter space $\R^d$, is divided into two parts: a discriminant variety $W$ and its complement $\Bar{S}\backslash W$. A discriminant variety $W$ of the system can be understood as the set of ``bad'' parameter values leading to non-generic solutions of the system, for instance, infinitely many solutions, solutions at infinity, or solutions of multiplicity greater than $1$. It is a generalization of the well-known discriminant of a univariate polynomial. 

The complement of $W$, $\Bar{S}\backslash W$, can be expressed as a finite disjoint union of connected open sets, usually called cells, such that 
a specific behavior of the system does not change when the parameters vary within the same cell. 
%Since the geometry of the connected open cells in the complement of the discriminant variety may be quite complicated, it is required to further subdivide them 

Describing the connected open cells can be done using the so-called \emph{Cylindrical Algebraic Decomposition} or CAD for short \cite{dolzmann1999real}. It is a well-known concept in computational real algebraic geometry, first proposed by G.~E.~Collins \cite{collins1975quantifier}. It allows us to obtain a practical description of the connected components of the complement of the discriminant variety.
%Eventually, CAD decomposes the parameter space into smaller regions, each corresponding to a set of parameter values for which the system has a specific behavior (e.g., existence of solutions, number of solutions, etc.).  
For a well-behaved system, the discriminant variety is of dimension less than $d$, and it characterizes the boundaries between these cells. Hence, the number of solutions of the system only changes on the boundary or when crossing a boundary. We are interested in open cells of maximal dimension (complement of the discriminant variety), which is
%allows for 
the efficient processing of computations within the cylindrical algebraic decomposition.

For each open cell in the parameter space, we can choose a sample point, evaluate the original system at the sample point, and then solve the resulting non-parametric system. In this way, we can, for instance, determine the number of real solutions of the system that are constant for parameter values chosen in the same open cell.

\begin{example}
Consider the following semi-algebraic system 
%\begin{equation}\label{DV:example}
$$\mathcal{C} = \big\{a\, x^2 + b -1 =0,\, b\, z +y = 0,\, c\, z +\, y =0, \, c\, >\, 0 \big\}, $$
%\end{equation}
where $\{a,\, b,\, c\}$ is the set of parameters and $\{x,\, y,\, z\}$ is the set of indeterminates. Then, a discriminant variety is defined by: $$\mathcal{D}=\big\{ (a, \, b \, , c) \; |\; a = 0 \text{ or } b=1 \text{ or } b = c \text{ or } c = 0 \big\}.$$ In fact, the case $a = 0$ corresponds to a vanishing leading coefficient of the first equation which can be interpreted as ``solution at $\infty$''. If $b=1$, then the first equation has a real root $x=0$ of multiplicity $2$. If $b = c$, the second and third equations of $\mathcal{C}$ coincide, and therefore the system $\mathcal{C}$ becomes underdetermined and has infinitely many solutions. Finally, the case $c = 0$ corresponds to a boundary case for the inequality $ c > 0$. Thus, for every choice of the parameter values outside the discriminant variety $\mathcal{D}$, the system 
%(\ref{DV:example}) 
$\mathcal{C}$ has finitely many solutions, all of multiplicity $1$.
\end{example}

A {\tt Maple} package for solving parametric polynomial systems was introduced in \cite{gerhard2010package} under the name {\tt Parametric}. It is mainly based on techniques such as Gröbner bases, polynomial real root finding, and cylindrical algebraic decomposition. In particular, the command {\tt DiscriminantVariety} of this package computes a discriminant variety of a polynomial system depending on parameters and verifying the required conditions.

\begin{example}
We consider the univariate polynomial $P= x^2 +a \, x +  b$ whose coefficients depend on the parameters set {$\{a, b\}$}. A discriminant variety is thus the curve $a^2 - 4 \, b = 0$. Using CAD, the parameter space can then be decomposed into $4$ cells (see Figure~\ref{fig:regions_plot}) in which $P$ has a constant number of solutions that can be computed using a sample point from each cell.
\begin{figure}[H]\centering\includegraphics[width=0.3\textwidth]{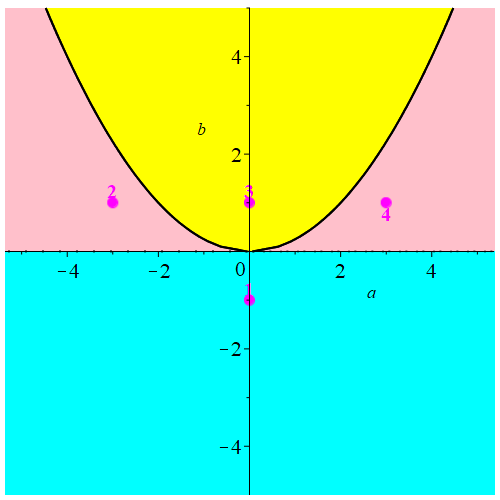}
\caption{Plot of the 4 regions of the parameter space for which the univariate equation $P= x^2 +a \, x +  b $ has exactly two solutions in the pink (regions $2$ and $4$) and cyan (region $1$) cells below the parabola, and no solution in the yellow cell (region $3$) above the parabola.}
\label{fig:regions_plot}
\end{figure}
\end{example}
\subsection{Proposed algorithm}
{

Let $n(\omega, \gamma)$ be defined as in (\ref{def:Phi}) and (\ref{eq:Curve_C}). Note that $n$ belongs to $\mathbb{Z}[\alpha][\omega, \gamma]$, where $\alpha=\alpha_1, \ldots, \alpha_d$ represents a set of real parameters. By considering the following two sets $\Sigma$ and $\Sigma_{\infty}$ %defined as follows:
\[
\left \{ \begin{array}{ll} 
\Sigma = \left \{( \omega, \,  \gamma, \alpha) \in \mathbb{R}^{2+d} \; | \; n(\omega, \gamma, \alpha)=0, \dfrac{\partial n}{\partial \omega}(\omega, \gamma, \alpha)=0 \right\}, \vspace{1mm}
\\ \Sigma_{\infty} = \big\{ (\gamma, \alpha ) \in {\R}^{1 + d} \; | \;  \Lc_{\omega}(n) = 0\big\},
\end{array}
\right.
\] 
then, for $\alpha \in \mathbb{R}^d$, the $L^{\infty}$-norm of the transfer matrix $G$ is given by 
$$\|G \|_{\infty}=\max \left(\pi_\gamma(S), \pi_\gamma(S_{\infty})\right).$$
 
This implies that the norm we are aiming to compute (namely, the maximal $\gamma$ projection of the real solutions of $\Sigma$ and $\Sigma_{\infty}$) is a function of the parameters, and thus, is influenced by the parameters $\alpha$.

Before introducing the proposed algorithm for representing this norm in terms of the parameters, consider the transfer function derived from an application to a gyro-stabilized sight model, as discussed in \cite{rance2018commande}. Notably, the transfer function in this model relies on a specific set of parameters. In Proposition~\ref{norm_with_parameters_proposition},
we will compute its $L^{\infty}$-norm as a function of these parameters through a step-by-step proof. Employing computer algebra tools, including resultant polynomials and intricate properties related to roots' multiplicity, we will demonstrate the process.

\begin{proposition}\label{norm_with_parameters_proposition}
Let $\omega_0,  \ \omega_1 \in \mathbb{R}_{>0}$, $\omega_0 \neq \omega_1$, $0<\xi\leq 1$ and:
\begin{equation}\label{def:G}
G= \dfrac{\left(\dfrac{s}{\omega_0}\right )^2+2 \, \xi \, \left(\dfrac{s}{\omega_0} \right)+1}{\left(\dfrac{s}{\omega_1} \right)^2+2 \, \xi \, \left(\dfrac{s}{\omega_1} \right)+1}.
\end{equation}
Set $r=\omega_1/\omega_0$, $\mu= 4 \, \xi^2 \, (\xi - 1) \, (\xi + 1)$, and let $\delta$ be the maximal real root of:
$$M=\mu \,  \gamma^4+\big((r^2-1)^2 - 2 \, \mu \, r^2\big) \, \gamma^2+\mu \, r^4 \in \mathbb{R}[\gamma].$$
Then, the $L^{\infty}$-norm of $G$ is given by:
\begin{equation}\label{equation parametre}
  \parallel G \parallel_{\infty} \, = \, 
 \left \{
    \begin{array}{ll}
        \max\{1, r^2\} & \; \mbox{if} \; \; \xi \geq \frac{1}{\sqrt{2}},  \vspace{1mm} \\
       \delta & \; \mbox{if} \; \;  \xi < \frac{1}{\sqrt{2}}.
    \end{array}
\right.  
\end{equation}
\end{proposition}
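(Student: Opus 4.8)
The plan is to specialize the general characterization of Proposition~\ref{prop:caracterization_infty_norm} to this scalar ($1\times 1$) case. Since $G$ is scalar we have $\Phi_{\gamma}(i\,\omega)=\gamma^2-|G(i\,\omega)|^2$, so the first step is to compute $|G(i\,\omega)|^2$ explicitly. Writing $t=(\omega/\omega_0)^2\geq 0$ and clearing denominators, I would obtain
\[
|G(i\,\omega)|^2=\frac{r^4\,\big(t^2+(4\,\xi^2-2)\,t+1\big)}{t^2+(4\,\xi^2-2)\,r^2\,t+r^4},
\]
so that the numerator $n$ of $\gamma^2-|G(i\,\omega)|^2$ is the biquadratic
\[
n=(\gamma^2-r^4)\,t^2+(4\,\xi^2-2)\,r^2\,(\gamma^2-r^2)\,t+r^4\,(\gamma^2-1),\qquad t=(\omega/\omega_0)^2.
\]
Because the answer depends on $\omega_0,\omega_1$ only through $r$, this already reflects the scale invariance of the norm in $\omega$, and $t\geq 0$ is exactly the condition that $\omega$ be real.

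Next I would read off the candidates produced by Proposition~\ref{prop:caracterization_infty_norm}. The vanishing leading coefficient $\Lc_{\omega}(n)=\gamma^2-r^4$ yields $\gamma=r^2$, the ``solution at infinity'' value $|G(i\,\infty)|$. Since $\partial n/\partial\omega=2\,\omega\,(\partial n/\partial t)$, the critical system splits: the branch $\omega=0$ forces $\gamma=1$, the value $|G(0)|$, while the branch $\omega\neq 0$ is the double-root condition $\partial n/\partial t=0$. Eliminating $t$ between $n=0$ and $\partial n/\partial t=0$ is precisely the discriminant in $t$ of the quadratic $n$, and a direct computation gives that this discriminant equals $4\,r^4\,M$; in particular its leading coefficient in $\gamma^2$ is $(2\,\xi^2-1)^2-1=\mu$ and its middle coefficient is $(r^2-1)^2-2\,\mu\,r^2$, matching $M$. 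Hence the interior critical values of $\gamma$ are exactly the real roots of $M$, and the three sources of candidates are $1$, $r^2$, and the roots of $M$.

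The decisive step is to decide which interior critical values are genuinely attained by a \emph{real} frequency, i.e.\ correspond to a double root $t_c\geq 0$. Solving $\partial n/\partial t=0$ leads to the quadratic $(2\,\xi^2-1)\,t^2+(r^2+1)\,t+(2\,\xi^2-1)\,r^2=0$, whose two roots have product $r^2>0$, so their common sign is governed entirely by $\sign(2\,\xi^2-1)$. When $\xi\geq 1/\sqrt 2$ both roots are $\leq 0$ (or the quadratic degenerates to $t=0$), so no interior critical point corresponds to a real frequency, $f$ is monotone on $[0,\infty)$, and $\|G\|_{\infty}=\max\{1,r^2\}$. When $\xi<1/\sqrt 2$ both roots are strictly positive, the resonance peak is realized, and it remains to check that the maximal root $\delta$ of $M$ dominates the endpoint candidates. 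For this I would evaluate $M$, as a polynomial in $\gamma^2$, at $\gamma^2=1$ and $\gamma^2=r^4$, obtaining $(r^2-1)^2(2\,\xi^2-1)^2$ and $r^4\,(r^2-1)^2(2\,\xi^2-1)^2$ respectively, both strictly positive since $r\neq 1$ and $\xi\neq 1/\sqrt 2$; as the leading coefficient $\mu$ is negative for $\xi<1$, $M$ is positive exactly on the interval between its two positive roots in $\gamma^2$, so both $1$ and $r^4$ lie strictly below $\delta^2$. This gives $\delta>\max\{1,r^2\}$, whence $\|G\|_{\infty}=\delta$.

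I expect the main obstacle to be the realizability bookkeeping of the third paragraph rather than the elimination itself. Proposition~\ref{prop:caracterization_infty_norm} returns the roots of $M$ as purely algebraic candidates, but one must separate those arising from genuine real critical points ($t_c\geq 0$, hence $\omega$ real) from the spurious ones where the extremizing $t_c$ is negative, and it is precisely this separation — controlled by $\sign(2\,\xi^2-1)$ — that produces the dichotomy at $\xi=1/\sqrt 2$. By contrast, the coefficient identification that turns the discriminant into $M$ and the two sign evaluations $M|_{\gamma^2=1}>0$, $M|_{\gamma^2=r^4}>0$ are routine once the setup is in place.
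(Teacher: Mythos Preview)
Your argument is correct and follows the same global strategy as the paper: set up $n$ via Proposition~\ref{prop:caracterization_infty_norm}, extract the candidates $\{1,r^2\}$ together with the roots of $M$, and then decide which interior candidates are realized by a real frequency through the sign of $2\xi^2-1$. The paper carries this out with both resultants $R=\Res(n,\partial n/\partial\omega,\omega)$ and $F=\Res(n,\partial n/\partial\omega,\gamma)$ and two auxiliary lemmas (Lemma~\ref{lemma_for_N} for the location of the roots $X_1<1<X_2$, $X_2>r^4$ of $M_1$, and Lemma~\ref{lemma_for_L} for the positivity of the roots of $F_1$ in $\omega^2$). Your substitution $t=(\omega/\omega_0)^2$ collapses $n$ to a genuine quadratic in $t$, so eliminating $t$ is literally the discriminant (yielding $4r^4M$) and eliminating $\gamma$ gives the quadratic $(2\xi^2-1)t^2+(r^2+1)t+(2\xi^2-1)r^2$, which is exactly the paper's $F_1$ after the change of variable; your Vieta observation (product $r^2>0$, sign of the sum $=-\sign(2\xi^2-1)$) is precisely the content of Lemma~\ref{lemma_for_L}. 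Likewise, your direct evaluations $M_1(1)=(2\xi^2-1)^2(r^2-1)^2>0$ and $M_1(r^4)=r^4(2\xi^2-1)^2(r^2-1)^2>0$, together with $\mu<0$, replace the substitution $x=X-1$ of Lemma~\ref{lemma_for_N} and yield $\delta^2=X_2>\max\{1,r^4\}$ more cheaply. Two small points of exposition: the phrase ``solving $\partial n/\partial t=0$ leads to the quadratic'' is loose, since $\partial n/\partial t$ is linear in $t$; the quadratic in $t$ actually arises after eliminating $\gamma^2$ from $\{n=0,\partial n/\partial t=0\}$ (equivalently, from $f'(t)=0$ where $\gamma^2=f(t)$). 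And you tacitly use that this quadratic has real roots; this follows from $|2\xi^2-1|<1$ for $0<\xi<1$, which gives discriminant $(r^2+1)^2-4(2\xi^2-1)^2r^2>(r^2-1)^2\geq 0$, the check done in Lemma~\ref{lemma_for_L}.
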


To establish Proposition~\ref{norm_with_parameters_proposition}, we introduce two essential lemmas that will facilitate our proof.
%namely, Lemmas~\ref{lemma_for_N} and \ref{lemma_for_L}.

\begin{lemma}\label{lemma_for_N}
Let us consider $\xi, \, \omega_0, \, \omega_1 \in \mathbb{R}_{>0}$, $\omega_0 \neq \omega_1$, $0 < \xi <  1$, $r=\omega_1/\omega_0$, and $\mu= 4 \,  \xi^2 \, (\xi - 1) \, (\xi + 1)$. Then,  the following polynomial 
 $$M_1=\mu \, X^2+\big((r^2-1)^2 - 2 \, \mu \, r^2\big) \, X+\mu \, r^4  \in \mathbb{R}[X]$$ 
 %\red{(Il s'appelle $M$ dans la Proposition~3.2 (!))}
 has two positive real roots $X_1$ and $X_2$ verifying $0<X_1<1<X_2$ and {$X_2>r^4$}.
\end{lemma}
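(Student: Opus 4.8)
The plan is to treat $M_1$ as a quadratic in $X$ whose leading coefficient $\mu = 4\,\xi^2\,(\xi-1)\,(\xi+1)$ is \emph{negative} (since $0 < \xi < 1$ forces $\xi^2 - 1 < 0$), so that the parabola opens downward and is strictly positive exactly on the open interval between its two roots. All the claimed inequalities then reduce to a handful of sign computations: the sign of the discriminant, the signs of the two Vieta symmetric functions, and the sign of $M_1$ at the test point $X = 1$.

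First I would establish that $M_1$ has two distinct positive real roots. Writing $M_1 = \mu\,X^2 + b\,X + \mu\,r^4$ with $b = (r^2-1)^2 - 2\,\mu\,r^2$, Vieta's formulas give the product $X_1\,X_2 = r^4 > 0$ and the sum $X_1 + X_2 = 2\,r^2 - (r^2-1)^2/\mu$, which is positive because $-(r^2-1)^2/\mu > 0$ (here $r \neq 1$ since $\omega_0 \neq \omega_1$, and $\mu < 0$). Hence, as soon as the roots are real, they are both positive. Reality follows from the discriminant, which I expect to collapse to
$$\Delta = (r^2-1)^2\,\big((r^2-1)^2 - 4\,\mu\,r^2\big);$$
both factors are strictly positive ($r \neq 1$ and $\mu < 0$), so $\Delta > 0$ and the two roots are real and distinct.

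Next I would pin the value $1$ strictly between the roots by evaluating $M_1$ there. A direct substitution gives
$$M_1(1) = \mu\,(r^2-1)^2 + (r^2-1)^2 = (r^2-1)^2\,(\mu + 1),$$
and the key identity $\mu + 1 = 4\,\xi^4 - 4\,\xi^2 + 1 = (2\,\xi^2 - 1)^2$ shows $M_1(1) = (r^2-1)^2\,(2\,\xi^2 - 1)^2 > 0$ (positive since $r \neq 1$, with the single degenerate value $\xi = 1/\sqrt{2}$, where it vanishes, coinciding with the other branch of Proposition~\ref{norm_with_parameters_proposition}). Since the downward parabola is positive at $X = 1$, this point lies strictly inside the interval between $X_1$ and $X_2$, giving $0 < X_1 < 1 < X_2$. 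Finally, the bound $X_2 > r^4$ requires no case split on whether $r \lessgtr 1$: from $X_1 < 1$, $X_1 > 0$, and $X_1\,X_2 = r^4$ we get $X_2 = r^4/X_1 > r^4$.

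Each step is individually short, so the genuine obstacle is purely algebraic: spotting the two clean factorizations $\Delta = (r^2-1)^2\big((r^2-1)^2 - 4\,\mu\,r^2\big)$ and $\mu + 1 = (2\,\xi^2-1)^2$. These are precisely what reduce the argument to sign inspection rather than leaving an unwieldy set of quartic conditions in $\xi$ and $r$; once they are in hand, the localization of the roots is immediate from the shape of the parabola and Vieta's relations.
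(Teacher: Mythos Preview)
Your proof is correct and follows essentially the same route as the paper: the same factorization of the discriminant, the same Vieta argument for positivity, the identity $\mu+1=(2\xi^2-1)^2$ to locate $1$ between the roots, and the same product relation $X_1X_2=r^4$ to conclude $X_2>r^4$. The only cosmetic difference is that the paper shifts $X\mapsto X-1$ and reads off $(X_1-1)(X_2-1)=M_1(1)/\mu<0$ from Vieta, whereas you evaluate $M_1(1)>0$ directly and invoke the downward opening of the parabola; these are the same computation viewed two ways, and you correctly flag the borderline case $\xi=1/\sqrt{2}$ that the paper handles separately in the surrounding proposition.
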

\begin{proof}
A discriminant of $M_1$ is $\Delta=(r+1)^2 \, (r-1)^2 \, \big((r^2-1)^2-4 \, \mu \, r^2\big)$. Since $0 < \xi \leq 1$, we get $-\mu \leq 0$, and thus, $\Delta >0$, which shows that $M_1$ has two distinct real solutions, denoted by $X_1$ and $X_2$ with the assumption that $X_1 < X_2$. Moreover, we have
$$X_1 \, X_2=r^4 >0, \quad X_1+X_2=\dfrac{(r^2-1)^2 - 2 \, \mu \, r^2}{-\mu} >0,$$
which yields $X_1 {>0}$ and $X_2=r^4/X_1 >0$. Now, if we let $x:=X-1$, then we get $M_1(X)=M_1(x+1)=m(x)$, where:
$$m(x)=\mu \, x^2+\big((r^2-1)^2-8 \, \xi^2 \, (1-\xi^2) \, (r^2+1)\big) \, x +(2 \, \xi^2-1)^2 \, (r-1)^2 \, (r+1)^2.$$
Clearly, the two roots of $m$ are $X_1-1$ and $X_2-1$, and we have 
$$(X_1-1) \, (X_2-1)=\dfrac{(2 \, \xi^2-1)^2 \, (r-1)^2 \, (r+1)^2}{\mu} <0,$$
which shows that $X_1 < 1$ and $X_2 > 1$ since $X_1 < X_2$. Finally, $X_1 < 1$ yields $X_2=r^4/X_1 > r^4$, which proves the result. 

\end{proof}

\begin{lemma}\label{lemma_for_L}
Let $\omega_0, \,  \omega_1 \in \mathbb{R}_{>0}$, $\omega_0 \neq \omega_1$, $0<\xi< 1$, $\xi \neq 1/\sqrt{2}$, $r=\omega_1/\omega_0$ and $\beta = 2 \, \xi^2 -1$. Then, the following polynomial 
 $$L=\beta \, Y^2+\omega_0^2 \, (r^2+1) \, Y+\beta \, r^2 \, \omega_0^4  \in \mathbb{R}[Y]$$
has two positive real roots if and only if $0<\xi< 1/\sqrt{2}$. \end{lemma}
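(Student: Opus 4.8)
The plan is to analyze the quadratic $L = \beta \, Y^2 + \omega_0^2 \, (r^2+1) \, Y + \beta \, r^2 \, \omega_0^4$ as a polynomial in $Y$ whose leading and constant coefficients both carry the factor $\beta = 2\,\xi^2 - 1$, while the middle coefficient $\omega_0^2\,(r^2+1)$ is strictly positive. The sign of $\beta$ changes exactly at $\xi = 1/\sqrt{2}$: for $0 < \xi < 1/\sqrt{2}$ we have $\beta < 0$, and for $1/\sqrt{2} < \xi < 1$ we have $\beta > 0$. First I would record that since $\xi \neq 1/\sqrt{2}$, $\beta \neq 0$, so $L$ is genuinely quadratic and has a well-defined pair of complex roots $Y_1, Y_2$ (counted with multiplicity).

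The heart of the argument is to track the product and sum of the roots via Vi\`ete's formulas. The plan is to compute
\begin{equation}\label{eq:viete-L}
Y_1 \, Y_2 = \frac{\beta \, r^2 \, \omega_0^4}{\beta} = r^2 \, \omega_0^4 > 0, \qquad Y_1 + Y_2 = -\frac{\omega_0^2 \, (r^2+1)}{\beta}.
\end{equation}
The product is always strictly positive and independent of $\beta$, so whenever the roots are real they automatically share the same sign; their common sign is then dictated by the sign of the sum, which in turn is governed by the sign of $\beta$. When $\beta < 0$ (i.e. $0 < \xi < 1/\sqrt{2}$), the sum is positive, forcing any real roots to be \emph{both positive}; when $\beta > 0$ (i.e. $\xi > 1/\sqrt{2}$), the sum is negative, forcing any real roots to be \emph{both negative}. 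This immediately shows that two \emph{positive} real roots can only occur in the regime $0 < \xi < 1/\sqrt{2}$, giving one implication for free.

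For the converse I would verify that in the regime $0 < \xi < 1/\sqrt{2}$ the roots are indeed real and distinct (hence genuinely two positive roots), by showing the discriminant of $L$ is nonnegative there. The plan is to compute
\begin{equation}\label{eq:disc-L}
\Delta_L = \omega_0^4 \, (r^2+1)^2 - 4 \, \beta^2 \, r^2 \, \omega_0^4 = \omega_0^4 \, \big((r^2+1)^2 - 4 \, \beta^2 \, r^2\big),
\end{equation}
and to show $(r^2+1)^2 - 4\,\beta^2\,r^2 > 0$. Since $r > 0$ and $r \neq 1$, the bound $(r^2+1)^2 \geq 4\,r^2 > 4\,\beta^2\,r^2$ holds as soon as $\beta^2 < 1$, i.e. $|2\,\xi^2 - 1| < 1$, which is equivalent to $0 < \xi < 1$; this is exactly our standing hypothesis. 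Thus $\Delta_L > 0$ throughout $0 < \xi < 1$, so $L$ always has two distinct real roots, and the sign analysis above upgrades ``two real roots'' to ``two positive real roots'' precisely when $0 < \xi < 1/\sqrt{2}$.

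The only subtlety, and the step I expect to require the most care, is making sure the strict inequalities are clean at the boundary: one must use $r \neq 1$ to get the \emph{strict} inequality $(r^2+1)^2 > 4\,r^2$ (rather than merely $\geq$), and one must invoke $\xi \neq 1/\sqrt{2}$ to keep $L$ quadratic and the sign of $\beta$ unambiguous. Everything else reduces to the sign bookkeeping in \eqref{eq:viete-L} together with the discriminant estimate \eqref{eq:disc-L}; combining the forward implication (positive roots force $\beta < 0$) with the discriminant computation (real and distinct throughout, both positive when $\beta < 0$) yields the stated equivalence.
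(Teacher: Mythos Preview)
Your argument is correct and follows the same route as the paper: compute the discriminant of $L$, show it is strictly positive throughout $0<\xi<1$ using $\beta^2<1$, and then apply Vi\`ete's formulas for the product and sum of the roots to pin down the sign pattern via the sign of $\beta$. The only cosmetic difference is that the paper factors the discriminant as $\omega_0^4\,(r^2+2\beta r+1)\,(r^2-2\beta r+1)$ and argues each factor is positive (as a quadratic in $r$ with discriminant $4(\beta^2-1)<0$), whereas you bound the expanded form $(r^2+1)^2-4\beta^2 r^2$ directly; both use precisely $|\beta|<1$. One small remark: you do not actually need $r\neq 1$ for strictness of $\Delta_L>0$, since the second inequality $4r^2>4\beta^2 r^2$ in your chain is already strict whenever $\beta^2<1$ and $r>0$.
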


\begin{proof}
A discriminant of $L$ is $\delta = w_0^4 \, (r^2 +2 \, \beta \,  r + 1)\, (r^2 -2 \, \beta \, r + 1)$. We have $\beta^2-1=4 \, \xi \, (\xi-1)<0$, and thus, the discriminant $4 \, (\beta^2-1)$ of the two polynomials $r^2 +2 \, \beta \,  r + 1$ and $r^2 -2 \, \beta \, r + 1$ is negative, which yields $\delta >0$ and thus, $L$ has two distinct real roots. The product of these roots is $r^2 \, \omega_0^4 >0$ and their sum is $\omega_0^2 \, (r^2+1)/(-\beta)$.
Since $\beta < 0$ if and only if $0 < \xi < 1/\sqrt{2}$, we obtain that the sum is positive if and only if $0 < \xi < 1/\sqrt{2}$, which then implies that $L$ has two positive real roots only when $0 < \xi < 1/\sqrt{2}$. 
\end{proof}

We can now proceed with the proof of Proposition~\ref{norm_with_parameters_proposition} by leveraging  Lemmas~\ref{lemma_for_N} and \ref{lemma_for_L}.

\begin{propproof}[of Proposition~\ref{norm_with_parameters_proposition}]
Set $\alpha = (r, \, \omega_0, \, \xi) \in \R^3$. Let $N$ and $D$ be two polynomials such that:
$$G(-i \, \omega) \, G(i \, \omega)=\dfrac{N(\omega)}{D(\omega)}.$$
Let $n(\gamma, \omega)=D(\omega) \, \gamma^2-N(\omega)$ and consider:
\[
\left \{ \begin{array}{ll}
%{S} = \big\{n(\omega, \, \gamma), \frac{\partial n }{\partial \omega}(\omega, \, \gamma) \big\} \subset \Z[\alpha][\omega, \, \gamma], \vspace{1mm} \\
\Sigma = \left \{(\omega, \, \gamma) \in \mathbb{R}^{2} \; | \; \ n(\omega, \, \gamma)=0,\ \dfrac{\partial n }{\partial \omega}(\omega, \, \gamma)=0 \right\}, \vspace{1mm} \\ \Sigma_{\infty} = \big\{ \gamma \in \R  \; | \;  \Lc_{\omega}(n) = 0\big\}.
\end{array}
\right.
\]

Doing the computation, we obtain $$n=(\gamma^2-r^4) \, \omega^4+2 \, r^2 \, \omega_0^2 \, \beta \, (\gamma^2-r^2) \, \omega^2+r^4 \, \omega_0^4 \, (\gamma^2-1),$$
where $\beta = 2 \, \xi^2 -1$.
The resultant $R$ of $n(\omega, \gamma)$ and $\frac{\partial n}{\partial \omega}(\omega, \gamma)$ with respect to $\omega$ is then defined by
 $$R=256 \, w_0^{12} \, r^{12} \, (\gamma^2-1) \, (\gamma^2-r^4)^2 \, M^2,$$
where $M=\mu \, \gamma^4+((r^2-1)^2 - 2 \, \mu \, r^2) \, \gamma^2+\mu \, r^4$, i.e., $M(\gamma)=M_1(\gamma^2)$ with $M_1$ defined in Lemma~\ref{lemma_for_N}. By assumptions on the parameters, $M_1(X)$ has two positive real solutions, $X_1$ and $X_2$, and thus, $M$ has the four real roots $\pm \sqrt{X_1}$, $\pm \sqrt{X_2}$.  
Thus, based on the properties of the resultant polynomial, we have:  $$\parallel G \parallel _ {\infty} = \max \left\{  \left \{1, \, r^2, \,  \sqrt{X_1}, \, \sqrt{X_2} \right \} \cup \bigg( \pi_{\gamma}(\Sigma) \cup \Sigma_{\infty} \bigg) \right\}.$$
\begin{enumerate}
\item For $\gamma=r^2$: ${\rm Lc}_{\omega}(n)=(\gamma^2-r^4)=0$, i.e, $r^2 \in \Sigma_{\infty}$.
\item For $\gamma=1$, we get:
\[
 \left \{
    \begin{array}{ll}
    n(\omega,\, 1)=(r^2-1) \, \omega^2 \, f_1, \\
      q(\omega,\, 1)=4 \, (r^2-1) \, \omega  \, f_2,
    \end{array}
\right., \quad 
\left \{
    \begin{array}{ll}
    f_1:= (r^2 + 1) \, \omega^2 + 2 \, r^2 \, \omega_0^2 \, (2 \, \xi^2 - 1), \\
    f_2:=(r^2 + 1) \, \omega^2 + r^2 \, \omega_0^2 \, (2 \, \xi^2 - 1).
    \end{array}
\right.
\]

Hence, ${\rm Res}(f_1,\, f_2,\, \omega)=\big(\omega_0 \, r^2 \, (\xi^2+1) \, (r^2+1)\big)^2$ does not vanish. Thus, $\gcd(f_1,f_2)=1$, which yields $\gcd\big( n( \omega, \, 1),\, q(\omega, \, 1)\big)=(r^2-1) \, \omega$ and proves: 
$$(\omega, \, \gamma)=( 0, \, 1) \in \Sigma.$$
\item For $\gamma$ real root of $M$:
the point is to verify that $\gamma \in \pi_{\gamma} (\Sigma)$. 
For doing so, we start by computing $F={\rm Res}(n(\omega, \gamma),\frac{\partial n}{\partial \omega}(\omega, \gamma), \gamma)$. Using the properties of resultants, we recall that $\pi_{\omega}(\Sigma) \subset V_{\R}(F)$, where $F \in \mathbb{R}[\omega]$:

\[
F=c\; {\omega}^2\, F_1^2, \quad 
\left \{
    \begin{array}{ll}
     F_1= \beta\, \omega^4+\omega_0^2\, (r^2+1)\, \omega^2 + r^2\omega_0^4\, \beta, \\
    c= 16\; \omega_0^{20}\; r^8\, (r^2-1)^2.
    \end{array}
\right.
\]

We note that the roots of the polynomial $F_1 \, \in \, \mathbb{R}[\omega]$, are the $\omega$-coordinates of $( \omega_{i,j}, \, \gamma_i) \in V_{\mathbb{R}\times\mathbb{C}}(\langle n,\, \frac{\partial n}{\partial \omega} \rangle)$, where $\gamma_i \in V_{\mathbb{R}}(M)$. We have seen that: $$\pi_{\omega}\bigg(V_{\R} \big(\, \langle n(\omega, \, \pm 1), \frac{\partial n}{\partial \omega}( \omega, \, \pm 1)\,  \rangle \big)\bigg) = 0.$$ Taking into consideration the power and degree of the factor $\omega$ in $F$ and the power and degree of the factor $\gamma^2-1$ in $R$, and the properties of the resultant concerning root multiplicity (see \cite[Chapter 4]{cox1998d}), we can say that: $$(\omega,  \gamma )\in \Sigma, \ \gamma=\pm1 \iff \omega = 0.$$ 
With this being said, let $L\in \mathbb{R}[Y]$ be the polynomial obtained after substituting $\omega^2$ by $Y$ in $F_1$. Based on Lemma~\ref{lemma_for_L}, $L$ has two positive real roots if and only if $\xi<\frac{\sqrt{2}}{2}$. Thus we can conclude two cases:  
\begin{itemize}
    \item For $\xi< \frac{\sqrt{2}}{2}$, $F_1 \in \mathbb{R}[\omega]$ has four real roots. Thus, we have: $$\forall \; \gamma_i \in V_{\mathbb{R}}(M),\ \exists \  \omega_{i,j}\in V_{\mathbb{R}}(F_1): \; (\omega_{i,j}, \gamma_i)\in \Sigma.$$ Consequently, based on Lemma~\ref{lemma_for_N} where we proved that $$X_1<1<X_2, \quad X_2> r^4,$$ we can say that $\delta= \sqrt{X_2}>r^2$, and we  conclude that  $\parallel G \parallel _{\infty}= \delta$.
    \item For $\xi> \frac{\sqrt{2}}{2}$, $F_1$ has no real roots in $\omega$. In this case, none of the real roots of $M$ is a good candidate, and we conclude that: $$\parallel G \parallel _{\infty} \, = \,  \max \, \{1, r^2\}.$$ 
\end{itemize}

\item For $\xi= \frac{\sqrt{2}}{2}$,
$M= c\, (\gamma^2-1) \, (\gamma^2-r^4)$, where $c \in \R$. In this case, we have:
$$\parallel G \parallel _{\infty} \, = \, \max \, \{1,r^2\}.$$

\item Similarly, for $\xi= 1$, $M=c\, \gamma \, (\gamma^2-1) \, (\gamma^2-r^4)$, where $c \in \mathbb{R}$. Then, we have: 
$$\parallel G \parallel _{\infty} \, = \, \max \, \{1,r^2\}.$$  
\end{enumerate}
\end{propproof}

It is important to emphasize that deriving this representation manually, especially in the presence of parameters, can be a daunting task due to its inherent complexities. Therefore, we will proceed to introduce an algorithm specifically designed to compute the $L^{\infty}$-norm in the parametric case. After stating this proposed algorithm, we will apply it once more to the aforementioned example, verifying its efficacy by obtaining the same result.

To establish the context, one method for achieving this parameter space decomposition and norm representation is by employing a CAD of $\R^{d+2}$ adapted to $\{n(\omega, \, \gamma, \alpha) = 0,\, \frac{\partial n}{\partial \omega}(\omega, \, \gamma, \alpha) =0 \}$ and a semi-algebraic set $S_p$ containing the inequalities verified by the parameters. But it may yield a very huge result, difficult to analyze in practice, with lots of cells we are not interested in. We are just interested in the cells where the curves representing the $\gamma$-projection of the system solutions, when they exist, are continuous and do not intersect.

Below, we state Algorithm~\ref{algoP}, where {\tt non\_parametric} corresponds to one of the proposed methods in Section~\ref{Problem description: non parametric case} and $\tt index_i$ corresponds to the index of the output of {\tt non\_parametric} in the sorted set of the real roots of $R = {\rm Res}(n, \frac{\partial n}{\partial \omega}, \omega)$.

This algorithm uses a CAD to study 
%the resultant polynomial 
$R = {\rm Res}(n, \frac{\partial n}{\partial \omega}, \omega)$. This resultant polynomial depends on the set of parameters $\alpha$. Within this context, the discriminant variety simplifies to the discriminant of this resultant polynomial. The cells derived in the parameter space, post CAD application, represent regions where the roots $\gamma$ of the resultant exhibit non-variant behavior. Notably, within each cell, these roots do not intersect, ensuring their distinct positions relative to one another. By selecting a sample point within a cell, we can determine the ``maximal" $\gamma$, which is a parameterized expression depending on $\alpha$.

\begin{algorithm}
  \caption{Parametric case}
  \label{algoP}
    \textbf{Input:} A well-behaved polynomial system $\Sigma = \big\{n(\omega, \, \gamma) = 0, \frac{\partial n }{\partial \omega}(\omega, \, \gamma)=0 \big\}$, $n\in  \Z[\alpha_1,\ldots,\alpha_d ][\omega, \, \gamma]$, and a semi-algebraic set $S_p$ (for parameters conditions).\\
    \textbf{Output:} A list of pairs [$C_i$, {\tt index\_i}] as in Theorem~\ref{theorem:1}.
\begin{enumerate}
   \item Compute $R = {\rm Res}(n, \frac{\partial n}{\partial \omega}, \omega)$.
    \item Compute $\mathrm{R}_2$, a discriminant (variety) of $\{R=0, S_p\}$ with respect to $\pi_{\alpha}$.
    \item Using a {\tt CAD}, compute the partition $\{C_1,\dots, C_l\}$ of $C = \mathbb{R}^d \backslash \mathrm{R}_2 $, along with sample points ${\tt sample_i} \in C_i$.
    \item Apply {\tt non\_parametric} on ${\tt subs(sample_i , \Sigma)}$ and get {\tt index\_i}, 
    \item {\tt return} \big\{[$C_i$, {\tt index\_i}], $i= 1, \ldots, l$ \big\}.
\end{enumerate}
%  }
\end{algorithm}

\begin{theorem}\label{theorem:1}
Given a well-behaved system $\Sigma = \big\{n(\omega, \, \gamma) = 0, \frac{\partial n }{\partial \omega}(\omega, \, \gamma) = 0  \big\}$, where $n\in \Z[\alpha][\omega, \, \gamma]$ and $\alpha = (\alpha_1, \ldots, \alpha_d)$, and a semi-algebraic set $S_p$ verified by a set of parameters $\alpha$, Algorithm~\ref{algoP} outputs a set of pairs $[C_i , {\tt index_i}]$, where the $C_i$s are disjoint parameter cells in the parameter space and ${\tt index_i}$ denotes the the index of $$\gamma_{\max} = \max \left \{ \pi_{\gamma} \left ( V_{\mathbb{R}^2} \left(
\left \langle 
{n}, \frac{\partial {n}}{\partial \omega} 
\right \rangle 
\right)\right) \cup V_{\mathbb{R}}\left(\langle {\rm Lc}_{\omega}({n}) \rangle \right) \right\}$$ in the sorted set of the real roots of $R = {\rm Res}(n, \frac{\partial n}{\partial \omega}, \omega)$ over a cell $C_i$. 
\end{theorem}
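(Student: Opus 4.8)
The plan is to prove correctness by decoupling it into a \emph{pointwise} statement and a \emph{cell-invariance} statement. Fix any parameter value $\alpha^\star$ lying in an open cell $C_i$ of $\R^d\setminus\mathrm{R}_2$. At such a point the specialized system is a genuine non-parametric instance of the problem solved in Section~\ref{Problem description: non parametric case}, so by Proposition~\ref{prop:caracterization_infty_norm} the quantity $\gamma_{\max}$ evaluated at $\alpha^\star$ equals $\|G\|_\infty$, and by Corollary~\ref{RLinftybounded} it is finite. The subroutine {\tt non\_parametric} therefore returns at $\alpha^\star$ the correct value together with its position {\tt index\_i} in the sorted list of real roots of $R(\cdot,\alpha^\star)$. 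It thus remains to show that this index does not depend on the choice of the sample point inside $C_i$; this is exactly the content of the theorem, since it guarantees that the single sample-point computation of Step~4 faithfully describes the whole cell.

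First I would record why $\gamma_{\max}$ is always one of the real roots of $R={\rm Res}(n,\partial n/\partial\omega,\omega)$, so that speaking of its ``index'' is meaningful. By the elimination property of the resultant, for every specialized parameter the vanishing of $R(\gamma)$ is equivalent to $n(\cdot,\gamma)$ and $\partial n/\partial\omega(\cdot,\gamma)$ sharing a common root in $\omega$ over $\C$, or to both leading coefficients in $\omega$ vanishing (note that $\Lc_\omega(\partial n/\partial\omega)$ is a multiple of $\Lc_\omega(n)$). Consequently $\pi_\gamma\big(V_{\mathbb{R}^2}(\langle n,\partial n/\partial\omega\rangle)\big)\cup V_{\mathbb R}(\langle\Lc_\omega(n)\rangle)\subseteq V_{\mathbb R}(R)$, and hence $\gamma_{\max}$, being the maximum over that union, is a real root of $R$ with a well-defined index once the real roots are sorted.

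Next I would establish the combinatorial invariance over a cell. By construction $\mathrm{R}_2$ is a discriminant variety of $\{R=0, S_p\}$ relative to $\pi_\alpha$; by the defining property of discriminant varieties \cite{lazard2007solving}, over each connected component of $\R^d\setminus\mathrm{R}_2$ the number of distinct real roots of $R(\gamma,\alpha)$ is constant and these roots vary continuously without colliding, escaping to infinity, or meeting the real axis as $\alpha$ varies. The CAD refines $\R^d\setminus\mathrm{R}_2$ into semialgebraically connected cells $C_i$, so along any path inside $C_i$ the real roots of $R(\gamma,\cdot)$ are continuous, pairwise non-colliding functions of $\alpha$, whence their number and their left-to-right order are invariant. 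In particular the position of any distinguished real root, once identified, stays fixed, i.e. {\tt index\_i} is constant on $C_i$.

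The delicate point — and the step I expect to be the main obstacle — is that ``being $\gamma_{\max}$'' must itself be invariant, not merely the ordering of the roots. A real root $\gamma_j(\alpha)$ of $R$ qualifies as the $\gamma$-projection of a \emph{real} solution of $\Sigma$ only when the associated common root in $\omega$ is real; as $\alpha$ moves, a complex-conjugate pair of $\omega$-fibres could in principle become real (or conversely), changing which sorted root of $R$ realizes the maximum in Proposition~\ref{prop:caracterization_infty_norm}. The computation behind Proposition~\ref{norm_with_parameters_proposition}, where the auxiliary polynomial acquires real roots precisely when $\xi<1/\sqrt2$, is exactly such a phenomenon. The resolution is that any change of reality of an $\omega$-fibre forces a coincidence of $\omega$-roots, i.e. the vanishing of a discriminant of $n$ with respect to $\omega$; this locus is lower-dimensional and must be incorporated into the projection defining $\mathrm{R}_2$ — this is the role of the ad hoc projection strategy and of the well-behavedness hypothesis, which guarantees generic zero-dimensionality and radicality so that these transitions are confined to $\mathrm{R}_2$. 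Granting that $\mathrm{R}_2$ captures them, the reality status of every $\omega$-fibre is constant on each $C_i$, so the identity of $\gamma_{\max}$ as a specific sorted root of $R$ is stable; combining this with the pointwise correctness at the sample point yields the claim.
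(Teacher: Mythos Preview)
Your overall architecture matches the paper's proof closely: both argue pointwise correctness at the sample via the non-parametric routine, then cell-invariance of the index via the discriminant-variety/CAD mechanism applied to $R$. Your explicit verification that $\gamma_{\max}$ is always a root of $R$ is a nice addition that the paper leaves implicit.

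Where you go beyond the paper is in flagging the ``delicate point'': constancy of the \emph{ordering} of the real roots of $R$ over $C_i$ is not, on its face, the same as constancy of \emph{which} of those roots is $\gamma_{\max}$, because a real $\gamma$-root of $R$ may or may not have a \emph{real} $\omega$-fibre, and that reality status could in principle switch inside a cell. This is a legitimate concern, and the paper's own proof simply does not raise it; it passes directly from ``the graphs of the $\gamma$-roots are disjoint over $C_i$'' to ``\texttt{index\_i} is constant''.

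Your proposed resolution, however, does not close the gap. You write that a change of reality of an $\omega$-fibre ``forces a coincidence of $\omega$-roots, i.e.\ the vanishing of a discriminant of $n$ with respect to $\omega$''. But the discriminant of $n$ with respect to $\omega$ is, up to a leading-coefficient factor, precisely $R$ itself; its vanishing locus is exactly the set of $(\gamma,\alpha)$ you are already sitting on, so this gives no new constraint on $\alpha$. What is actually needed is a \emph{higher} subresultant (or intersection-multiplicity) condition: when two complex-conjugate critical $\omega$-values above a branch $\gamma_j(\alpha)$ collide on the real axis, the intersection multiplicity of $n$ and $\partial n/\partial\omega$ at that point jumps, which forces the multiplicity of $\gamma_j$ as a root of $R$ to jump; \emph{that} is what lands you in the zero set of $\mathrm{Res}(R,\partial R/\partial\gamma,\gamma)$ and hence in $R_2$. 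Your appeal to an ``ad hoc projection strategy'' does not substitute for this argument, since the paper's $R_2$ is explicitly just $\mathrm{Res}(R,\partial R/\partial\gamma,\gamma)$ together with the $S_p$ boundaries, nothing more. Concretely, in the paper's own example the factor $M^2$ of $R$ has double $\gamma$-roots with complex $\omega$-fibres for $\xi>1/\sqrt{2}$ and real ones for $\xi<1/\sqrt{2}$; the transition at $\xi=1/\sqrt{2}$ is detected by $R_2$ only because there the roots of $M$ coincide with the other factors of $R$, i.e.\ a genuine $\gamma$-collision occurs. Making this mechanism precise (and checking that it always happens, which requires some care when $R$ is not squarefree, as in the example) is the missing step.
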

\begin{proof}
In the first step, we compute a discriminant variety $R$ of $\big\{ n(\omega, \, \gamma, \alpha) = 0\big\}$ with respect to $\pi_{(\alpha,  \, \gamma)}$, where $n(\omega, \, \gamma, \alpha)$ is seen as a univariate polynomial in $\omega$. We recall that this discriminant variety is the set of parameter values leading to non-generic solutions of the system, for example, infinitely many solutions, solutions at infinity, or solutions of multiplicity greater than $1$. It is simply the resultant polynomial of the polynomial $n(\omega, \, \gamma)$ and its derivative with respect to the main variable $\omega$, i.e., $R = \Res\big({n}(\omega ,\gamma), \frac{\partial {n}}{\partial \omega}(\omega, \gamma), \omega \big) \in \Q[\alpha][\gamma]$. In this case, $R = 0$ is a sub-variety of $\R^{d+1}$ and the complement of $R = 0$, $\R^{d+1}\backslash \{R = 0\}$, can be expressed as a finite disjoint union of cells, which are connected open sets, such that the number of the system real solutions $\omega$ does not change when the parameters vary within the same cell.

In the second step, we consider the variable $\gamma$ as the main variable in the polynomial $R \in \Q[\alpha,\gamma]$ since it is the polynomial embodying the $\gamma$-projection of the system $\big\{ n(\omega, \, \gamma) = 0,\ \frac{\partial n}{\partial \omega}(\omega, \, \gamma) = 0\big\}$. In this case, the $\gamma$-projection of the system solutions is considered as a real function of $\alpha$ such that the position of the curves representing $\gamma(\alpha) = 0$ changes after each intersection of at least two curves in $\R^d$. To locate the maximal value $\gamma$ over a given cell, we decompose $\R^d$ into cells where no changes in the position of the curves $\gamma(\alpha)$ occur. For doing so, we can naturally propose to  eliminate from the parameter space the set of ``bad'' parameter values leading to non-generic solutions of 
$R = 0$. This set corresponds to a discriminant variety of $ \{(\alpha, \gamma) \in \mathbb{R}^{d+1} \; | \;  R=0\} \, \cup \, S_p$ with respect to $\pi_{\alpha}$, denoted by $R_2$. We recall that $R_2$ is simply the curve of the discriminant of $R$ with respect to the variable $\gamma$, multiplied by the leading coefficient of $R$ with respect to $\gamma$, that is nothing but ${\rm Res}\left(R, \frac{\partial R}{\partial \gamma}, \gamma \right) \in \Z[\alpha]$, up to some curves related to the inequalities of $S_p$.
%It is also the curve of the resultant polynomial. 

In the third step, using a CAD, we can decompose $ C =  \mathbb{R}^d \backslash R_2$ into connected cells, above each cell, the variable $\gamma$ is a real-valued function depending continuously on the parameters $\alpha$, whose graphs are disjoint. Now, 
%We represent the algorithm describing our proposed method.
%\paragraph{The proposed algorithm}
let $\big\{{C}_1,\dots, {C}_l\big\}$ be the partition of ${C}$ and let ${\tt sample_i}$ a sample point in $C_i$.

By substituting $\alpha$ by ${\tt sample_i}$ in $\Sigma$, we are now left with a zero-dimensional polynomial system $\Sigma_i$ in $\mathbb Z[\omega, \gamma]$. Thus, applying {\tt non\_parametric} in the fourth step, we can isolate $\gamma_{\max}$ 
as the maximal real root of $R = {\rm Res}(n, \frac{\partial n}{\partial \omega}, \omega)$ corresponding to a real point $(\omega,\gamma)$. 

Now, consider $\tt index_i$ to be the index of $\gamma_{\max}$ in the sorted set of the real roots of $R$. We know that over a cell $C_i$, the real roots $\gamma$ of $R$ are represented as real-valued functions depending continuously on the parameters, whose graphs are disjoint. This assures that for every parameter value in $C_i$, the relative positions of the functions $\gamma$ remain the same. Therefore, for every ${\tt sample_i}$ in $C_i$, $\tt index_i$ is the same. Thus, Algorithm~\ref{algoP} outputs a set of pairs $[C_i , {\tt index_i}]$, where the $C_i$'s are disjoint parameter cells and ${\tt index_i}$ denotes the index of $\gamma_{\max}$ in the sorted set of the real roots of $R = {\rm Res}(n, \frac{\partial n}{\partial \omega}, \omega)$ over a cell $C_i$. 
\end{proof}

\begin{example}\label{Example parametric case 2}
We consider again the transfer function $G$ studied in Proposition~\ref{norm_with_parameters_proposition}, i.e., (\ref{def:G}), 
%$$G= \dfrac{\left(\dfrac{s}{\omega_0}\right )^2+2 \, \xi \, \left(\dfrac{s}{\omega_0} \right)+1}{\left(\dfrac{s}{\omega_1} \right)^2+2 \, \xi \, \left(\dfrac{s}{\omega_1} \right)+1},$$ 
for  $S_p = \{\omega_0 >0,\, \omega_1>0,\, \omega_0 \neq \omega_1,\, 0<\xi\leq 1\}$. We can apply Algorithm~\ref{algoP} to the polynomial system $\Sigma = \big\{n(\omega, \, \gamma) =0, \frac{\partial n }{\partial \omega}(\omega, \, \gamma) =0 \big\}$ and $S_p$ in order to represent the ${L}^{\infty}$-norm of $G$.

Let $R={\rm Res}(n, \frac{\partial n }{\partial \omega}, \omega) \in \mathbb{Z}[\mathbf{\alpha}, \gamma],$ where $n$ and $R$ defined in the proof of Proposition~\ref{norm_with_parameters_proposition}. Doing the computation using the {\tt Maple} functions {\tt DiscriminantVariety} and {\tt CellDecomposition} of the {\tt RootFinding[Parametric]} package, a discriminant variety $R_2$ is given by the union of the curves defined by the following polynomials:$$\bigg\{r,\, \omega_0,\,  \xi,\,  r - 1,\,  r + 1,\,  \xi - 1,\,  \xi + 1,\,  2\, \xi^2 - 1,\,  -4 \, r\, \xi^2 + r^2 + 2 \, r + 1,\,  4 \, r \, \xi^2 + r^2 - 2 \, r + 1 \bigg\}.$$

Computing a CAD of $C = \mathbb{R}^d \backslash R_2$, we get the partition $\{C_1,\dots, C_4\}$, where
\begin{center}
  \begin{itemize}
    \item[] $C_1=  \bigg\{ 0< \xi < \sqrt{2}/2 \bigg\} \cap \bigg\{ 0< r < 1\bigg\} \cap \bigg\{ \omega_0 >0 \bigg\}$, \vspace{1mm}
    \item[] $C_2= \bigg\{ 0< \xi < \sqrt{2}/2 \bigg\} \cap \bigg\{ r > 1\bigg\} \cap \bigg\{ \omega_0 >0 \bigg\}$, \vspace{1mm}
   \item[]  $C_3= \bigg\{ \sqrt{2}/2< \xi < 1 \bigg\} \cap \bigg\{ 0< r < 1\bigg\} \cap \bigg\{ \omega_0 >0 \bigg\}$, \vspace{1mm}
   \item[] $C_4=\bigg\{ \sqrt{2}/2< \xi < 1 \bigg\} \cap \bigg\{  r > 1\bigg\} \cap \bigg\{ \omega_0 >0 \bigg\}$,
\end{itemize}
\end{center}
and the sample points
\begin{center}

\begin{itemize}
        \item[] ${\tt sample_1}=\left [\xi =\dfrac{25476206690102465}{72057594037927936}, \, r = 1/2, \,  \omega_0 = 1 \right]$,  \vspace{2mm}
        \item[] ${\tt sample_2}=\left [\xi =\dfrac{25476206690102465}{72057594037927936}, \, r = 2, \, \omega_0 = 1 \right]$,  \vspace{2mm}
         \item[]  ${\tt sample_3}=\left[\xi =\dfrac{30752501854533959}{36028797018963968}, \, r = 1/2, \, \omega_0 = 1 \right ]$,  \vspace{2mm}
          \item[] ${\tt sample_4}=\left [\xi =\dfrac{30752501854533959}{36028797018963968}, \, r = 2, \, \omega_0 = 1\right]$.
    \end{itemize}  
\end{center}    

Substituting $\alpha$ by ${\tt sample}_1$ in $n(\omega, \, \gamma)$, we obtain 
$$ n (\omega, \, \gamma) = \left(\gamma - \frac{1}{4} \right) \, \left(\gamma + \frac{1}{4} \right) \, \omega^4 + ( a \, \gamma^2 +b) \, \omega^2 + \frac{1}{16}(\gamma - 1) \, (\gamma +1),$$     
 where:
 \[
\left \{ \begin{array}{ll}
a= -  \dfrac{1947111321950592219128255965533823}{5192296858534827628530496329220096}, \vspace{2mm} \\
b=\dfrac{1947111321950592219128255965533823}{20769187434139310514121985316880384}.
\end{array}
\right.
\]
 
Using one of the proposed methods described in Section~\ref{Problem description: non parametric case}, we obtain the following isolating interval for  $\gamma_{\max}$ $$\bigg[\frac{6100687164736347533}{4611686018427387904}, \frac{24402748658945394263}{18446744073709551616}\bigg],$$ which is the isolating interval of the element of index $8$ in the sorted set of the real roots of $R = {\rm Res}(n, \frac{\partial n }{\partial \omega}, \omega) \in \Z[\gamma]$. Finally, we get $[C_1, {\tt index}_1=8]$ as an element returned in the output list. To match this result with Proposition~\ref{norm_with_parameters_proposition}, we can see that the real roots of $R$ are ordered as follows $$\left \{-\sqrt{X_2},\, -1,\, -r^2,\, -\sqrt{X_1},\, \sqrt{X_1},\, r^2,\, 1,\, \sqrt{X_2} \right\}$$ over $C_1$ and the element of index $8$ is indeed the $L^{\infty}$-norm of $G$ as proven in Proposition~\ref{norm_with_parameters_proposition}. 
 
Similarly, after substituting $\alpha$ by ${\tt sample}_2$ in $n$, 
$\gamma_{\max}$ is of isolating interval $$\bigg[\frac{6100687164736347533}{1152921504606846976}, \frac{ 24402748658945394263}{4611686018427387904}\bigg],$$ which is also the isolating interval of the element of index $8$ in the sorted set of the real roots of $R = {\rm Res}(n, \frac{\partial n }{\partial \omega}, \omega) \in \Z[\gamma]$. Thus, the second element of the output list of Algorithm~\ref{algoP} is $[C_2, {\tt index}_2=8]$. With the notations of Proposition~\ref{norm_with_parameters_proposition}, the real roots of $R$ are ordered as $\{-\sqrt{X_2},\, -r^2,\, -1,\, -\sqrt{X_1},\, \sqrt{X_1},\, 1,\, r^2,\, \sqrt{X_2}\}$ over $C_2$ and we can see that the result matches Proposition~\ref{norm_with_parameters_proposition}.

Following the same approach and substituting $\alpha$ by ${\tt sample}_3$ in $n$, we obtain that $\gamma_{\max}$ is of isolating interval $\big[1, 1\big],$ which is the isolating interval of the element of index $7$ in the sorted set of the real roots of $R$. The third element of the output list of Algorithm~\ref{algoP} is thus  $[C_3, {\tt index}_3=7]$. Moreover, we can verify the result of Proposition~\ref{norm_with_parameters_proposition} where the real roots of $R$ are ordered as $\{-\sqrt{X_2},\, -1,\, -r^2,\, -\sqrt{X_1},\, \sqrt{X_1},\, r^2,\, 1,\, \sqrt{X_2}\}$ over $C_3$. 

Finally, substituting $\alpha$ by ${\tt sample}_4$ in $n$,
we obtain that $\gamma_{\max}$ is of isolating interval $\big[4, 4\big]$ which is the isolating interval of the element of index $7$ in the sorted set of the real roots of $R$. The fourth element of the output list of Algorithm~\ref{algoP} is thus  $[C_4, {\tt index}_4 = 7]$. The real roots of $R$ are ordered over $C_4$ as $\{-\sqrt{X_2},\, -r^2,\, -1,\, -\sqrt{X_1},\, \sqrt{X_1},\, 1,\, r^2,\, \sqrt{X_2}\}$, and we can see that the result also matches Proposition~\ref{norm_with_parameters_proposition}.
\end{example}

\begin{comment}
\section{Conclusion}
In this paper, we generalize the approach developed in \cite{bouzidi2021computation} to the case where the polynomial system $\Sigma=\{P=0, \,  \partial P/\partial x=0\}$ is such that $P \in \Z[\alpha][x, y] $ and $\alpha =  (\alpha_1, \hdots, \alpha_d) \in \mathbb{R}^d$ is a set of parameter . The $y$-projections of the system solutions are continuous functions in the parameters, where the position of their curves also varies with respect to the parameter values. Thus, in our proposed algorithm, we have excluded from the parameters space the ``bad'' parameter values where any two curves of the functions representing the $y$-projection of the solutions ``collapse''. Then, using a cylindrical algebraic decomposition, we have decomposed the space of ``good'' parameter values into connected open sets, named cells, where the $y$-curves are well positioned. This allowed us to choose a rational sample point in each cell, and then apply an algorithm from the non-parametric approach to obtain the index ${\tt index_i}$ of the searched value in the ordered set of the real roots of $\Res( P, \frac{\partial P}{\partial x}, x) \in \Q[y] $. Hence, above each cell, we could assure that the searched value is represented by the ${\tt index_i}^{\rm th}$ $y$-curve, and therefore, symbolically represent the ${L}_{\infty}$-norm of finite-dimensional linear systems depending on parameters.
\end{comment}

\bibliographystyle{splncs04}
\bibliography{smple}

%\bibliography{smple}
%\printbibliography

\end{document}